\def\P{{ \mathrm{pr} }}
\def\C{{C_{\textrm{opt}}}}
\newcommand{\rom}[1]{\uppercase\expandafter{\romannumeral #1\relax}}
\newtheorem{remark}{Remark}
\newtheorem{theorem}{Theorem}
\newtheorem{example}{Example}
\begin{document}

\setcopyright{acmcopyright}

\title{Learning Time Series from Scale Information}
%
%
%
%
%

\numberofauthors{2} 
%
\author{
\alignauthor
Yuan Yang\\ 
       \affaddr{Department of Chemistry and Chemical Biology}\\
       \affaddr{Harvard University}\\
       \affaddr{12 Oxford Street, Cambridge, MA 02138 USA}\\
       \email{yuanyang@fas.harvard.edu}
\alignauthor
Jie Ding\\ 
       \affaddr{
       School of Engineering and Applied Sciences}\\
       \affaddr{Harvard University}\\
       \affaddr{33 Oxford Street, Cambridge, MA 02138 USA}\\
       \email{jieding@fas.harvard.edu}
}

\maketitle
\begin{abstract}
Sequentially obtained dataset usually exhibits different behavior at different data resolutions/scales. Instead of inferring from data at each scale individually, it is often more informative to interpret the data as an ensemble of time series from different scales. This naturally motivated us to propose a new concept referred to as the scale-based inference. The basic idea is that more accurate prediction can be made by exploiting scale information of a time series. 
We first propose a nonparametric predictor based on $k$-nearest neighbors with an optimally chosen $k$ for a single time series. 
Based on that, we focus on a specific but important type of scale information, the resolution/sampling rate of time series data. 
We then propose an algorithm to sequentially predict  time series using past data at various resolutions. 
We prove that asymptotically the algorithm produces the mean prediction error that is no larger than the best possible algorithm at any single resolution, under some optimally chosen parameters. 
Finally, we establish the general formulations for scale inference, and provide further motivating examples. 
Experiments on both synthetic and real data illustrate the potential applicability of our approaches to a wide range of time series models. 
\end{abstract}

%
%

\begin{CCSXML}
<ccs2012>
<concept>
<concept_id>10002950.10003648.10003688.10003693</concept_id>
<concept_desc>Mathematics of computing~Time series analysis</concept_desc>
<concept_significance>500</concept_significance>
</concept>
<concept>
<concept_id>10010147.10010257.10010321.10010336</concept_id>
<concept_desc>Computing methodologies~Feature selection</concept_desc>
<concept_significance>300</concept_significance>
</concept>
</ccs2012>
\end{CCSXML}

\ccsdesc[500]{Mathematics of computing~Time series analysis}
\ccsdesc[300]{Computing methodologies~Feature selection}
%
%

%
%
\printccsdesc


\keywords{Convergence; prediction; scale; time series}

\section{Introduction} \label{sec:intro}

Time series modeling plays a key role in understanding and predicting sequences of data that arise naturally from fields such as engineering, biology, finance, etc. The scope of time series considered in this paper are not limited to the data that are sequentially obtained in time. It may also be data sampled from space. 
In this work, we consider a novel methodology referred to as the scale-based inferences for predicting time series. The initial motivations are twofolds: first, a realistic time series usually consists of complex dynamics and cannot be modeled as a single stationary or ergodic one; second, the data size is usually massive so that offline inference 
 and model checking/selection is not computationally tractable. It is therefore preferable for an analyst to retrieve useful information of particular interest, such as trend, volatility, or one-step prediction, in a sequential manner. 
To address the aforementioned issues, we make a general assumption that, though a time series consists of distinct dynamics, such dynamics are ergodic. Thus, it is technically possible to infer the current piece of data by looking at historical data with behavioral resemblance. Here, ``dynamics'' is a general term that refers to any particular feature of time series, for example ``moment'', ``spectrum''. In a parametric setting, it can be understood as different parameters of a particular model, or different models. We will give more concrete formulations in later part of the paper. 

{\bf Related Work}:
In the time series literature, a classical way to extract the scale information such as trend and cycle is by time series decompositions\cite{cleveland1976decomposition,west1997time,baxter1999measuring}, which decomposes the data into trend/cycle plus a stationary process. 
Multi-state autoregressive models impose a parametric hierarchy on the mixture autoregressive (AR) models \cite{RACS,Allerton}. It usually assumes a Markov chain for the transition among finite number of AR states. The state can be treated as the scale information, and the data at each time step is governed by the data generating process corresponding to the particular state. 
Autoregressive Conditional Heteroskedasticity (ARCH) Models\cite{engle1982autoregressive} have been widely applied to modeling financial time series that exhibit time-varying volatility clustering. The variances that change over time can be understood as the scale information.
In the literature of statistics and signal processing, Least Absolute Shrinkage and Selection Operator (LASSO)\cite{tibshirani1996regression} and recursive $\ell_1$-regularized least squares (SPARLS)\cite{babadi2010sparls,sheikhattar2015recursive} are popular methods those aim at simultaneous model selection and inference; the two seemingly contradictory aspects are technically reconcilable, because the imposed $\ell_1$-constraint tends to select sparse number of covariates from the massive candidate set. 
They have been widely appreciated due to the robust performance and computational tractability, especially when compared with classical Akaike information criterion or Bayesian information criterion which requires combinatorial tests. 
In view of that, it can be understood as a scale-based philosophy: extract few significant covariates from a large scale while suppressing the remaining ones.    
In computer science, manifold learning methods\cite{belkin2006manifold} are based on the idea that the dimension of many datasets is only artificially high and that the data actually stay on a lower dimensional manifold.  In some manifold learning algorithms such as the locally linear embedding approach,  the modeling is based on each local information. It shares some similarities with our scale-based inference for time series in that they both aim to extract useful information inherited at each locality, which can be regarded as a scale parameter. 

{\bf Notation}: 
For two deterministic sequences $a_n,b_n$, we use $a_n = o(b_n)$ to represent $\lim_{n\rightarrow \infty} a_n/b_n = 0$. 
Let $o_p(1)$ denote any random variable that converges in probability to zero. 
We write $a_n=\Theta(b_n)$ if $c <  a_n/b_n < 1/c$ for some constant $c \neq 0$ for all sufficiently large $N$.
For a time series $x_n$ and a positive integer $d$, we let $\bar{x}_{n:n-d+1}$ denote the column vector $[x_n,x_{n-1},\cdots,x_{n-d+1}]^T$; when there is no ambiguity, we simply write it as $\bar{x}_{n}$. 
Let $|s|$ denote the Euclidean norm of a vector $s$.
Let $\mathcal{N}(\mu,V)$ denote the normal distribution with mean $\mu$ variance matrix $V$.


\section{Prediction for One Time Series} 

In general, the criteria to evaluate the performance of particularly retrieved information of interest may differ. If an analyst is interested in one-step ahead prediction, then it is natural to measure the goodness of modeling via the expected prediction error.
Suppose that the underlying data generating procedure is 
\begin{align} 
  x_n = f(x_{n-1}, x_{n-2}, \cdots,x_{n-d_0}) + \varepsilon_n  \label{eq3}
\end{align}
where $f: \mathbb{R}^d \rightarrow \mathbb{R}$ is a continuous function and $\varepsilon_n$ are independent and identically distributed noises with  mean $0$ and variance $\sigma^2$\footnote{We do not need to assume that noises are Gaussian here and also in the proof of Theorem~\ref{thm:1}.}.
The one-step prediction error (loss) is defined as $L_n = E(\hat{x}_n - x_n)^2$.
In a typical classical approach, either parametric or nonparametric, an analyst first estimates the function $\hat{f}(\cdot)$ and then apply the plug-in rule: $\hat{x}_n = \hat{f}(x_{n-1}, x_{n-2}, \cdots,x_{n-d})$.
The oracle situation is where $\hat{f} = f$ is luckily obtained and $L_n = E(\varepsilon_n^2) = \sigma^2$. Otherwise, it is straightforward to prove that $L_n > \sigma^2$. Though the lower bound $\sigma^2$ is practically not achievable, a ''good'' predictor is generally considered to be the one such that $L_n \rightarrow \sigma^2$ as $n \rightarrow \infty$ (i.e., more and more data are observed).
A major problem involved in a typical approach is that: if it is parametric (e.g., linear AR), there is a risk to mis-specify the model class (e.g., the true functional relationship is nonlinear); if it is nonparametric (e.g., spline approximation), it usually requires massive computation and it is not clear how to choose the kernels (e.g. number and spacing of knots).
We propose a simple but effective pattern-matching algorithm.
It can be categorized as a nonparametric approach, but much easier to implement than kernel-based ones since fast algorithms can be employed for searching $k$-nearest neighbors. 
Our approach is sketched in Algorithm~\ref{alg:1}.
Next, we provide theoretical analysis for its performance. We make the following assumptions. 

\begin{algorithm}[tb]
   \caption{Pattern Matching Algorithm}
   \label{alg:1}
\begin{algorithmic}[1]
\INPUT data $\{x_t:t=1,\cdots, n\}$; postulated dimension $d$; distance measure $D: \mathbb{R}^d \times \mathbb{R}^d \rightarrow R^{+} \cup \{0\}$;
\OUTPUT  predictor $\hat{x}_{n+1}$ 
   \STATE Collect 
   $ \{\bar{x}_{t_1},\cdots,\bar{x}_{t_k}\}$ which are $k$ nearest neighbors of $\bar{x}_{n}$ under distance measure $D$
   \STATE Let  
   $\hat{x}_{n+1} 
   = \sum_{i=1}^k x_{t_i+1} / k$
\end{algorithmic}
\end{algorithm}

(A1) For a constant $i$ large enough, $x_1, x_{1+i}, \cdots$ are independent. 

The assumption is applicable in practice because under mild conditions, many stationary process such as autoregression are strong-mixing \cite{athreya1986note}. In other words, as long as $i$ is allowed to diverge in any speed, $x_1, x_{1+i}, \cdots$ will become asymptotically independent.  

(A2) $f: \mathbb{R}^d \rightarrow \mathbb{R}$ is Lipschitz continuous function.

A linear function is perhaps the simplest example of Lipschitz continuous functions because of Cauchy's inequality. Thus, autoregressive model is a special case addressed here.

(A3) The sequence of random variables $\{X_n\}$ converges to a stationary distribution whose probability density function (PDF) is monotone decreasing in the tail. In other words, for a positive integer $d$, $\pi(\bar{x}_{n:n-d+1})$ is decreasing in  
  $|\bar{x}_{n:n-d+1}|$ for all $\bar{x}_{n:n-d+1} $ whose lengths are greater than a certain constant. 

A special case for (A3) to hold is the usual setting when $f(\cdot)$ is a linear function and the noises are Gaussian.
  
\begin{theorem} \label{thm:1}
  Assume (A1)-(A3) and that $D $ is the Euclidean distance, $k / n \rightarrow 0, k/\log(n) \rightarrow \infty$. 
  If $d \geq d_0$, where $d_0$ is the true dimension of the time series as defined in (\ref{eq3}), then for any bounded subspace $E \subset \mathbb{R}^{L}$, 
     for any $\bar{x}_{n+1} \in E$, $E(\hat{x}_{n+1} - x_{n+1})^2 \rightarrow \sigma^2$ a.s. with uniform convergence rate.  
\end{theorem}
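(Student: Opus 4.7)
The plan is to decompose the one-step prediction error into three parts --- a bias term coming from replacing $f(\bar{x}_n)$ by averaged $f$ values at the neighbors, a noise-averaging term, and the intrinsic noise $\varepsilon_{n+1}$ --- and to show that each of the first two vanishes uniformly on $E$. Writing $x_{n+1} = f(\bar{x}_n)+\varepsilon_{n+1}$ (the hypothesis $d\ge d_0$ ensures $f$ acts only on the first $d_0$ coordinates of $\bar{x}_n$) and $x_{t_i+1} = f(\bar{x}_{t_i}) + \varepsilon_{t_i+1}$, I obtain
$$
\hat{x}_{n+1} - x_{n+1}
= \underbrace{\frac{1}{k}\sum_{i=1}^{k}\bigl(f(\bar{x}_{t_i}) - f(\bar{x}_n)\bigr)}_{B_n}
+ \underbrace{\frac{1}{k}\sum_{i=1}^{k}\varepsilon_{t_i+1}}_{V_n}
- \varepsilon_{n+1}.
$$
Because $\varepsilon_{n+1}$ is independent of the sample used to form the predictor and has variance $\sigma^2$, we get $E(\hat{x}_{n+1}-x_{n+1})^2 = \sigma^2 + E[B_n^2] + E[V_n^2] + 2E[B_n V_n]$; it therefore suffices to prove $E[B_n^2]$ and $E[V_n^2]$ tend to $0$ uniformly for $\bar{x}_n\in E$, with the cross term then controlled by Cauchy--Schwarz.

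The bias is handled via Lipschitz continuity (A2): $|B_n|\le L\cdot R_k(\bar{x}_n)$, where $R_k(\bar{x}_n)$ is the $k$-th nearest-neighbor distance to $\bar{x}_n$. To bound $R_k$ uniformly over $\bar{x}_n\in E$, I would use (A1) to extract an essentially i.i.d.\ subsequence of length $\Theta(n)$ from the past, and combine (A3) with the boundedness of $E$ to obtain a uniform lower bound $\pi\ge\delta>0$ on the stationary density over a neighborhood of $E$. Then the expected number of subsequence points in the ball $B(\bar{x}_n,r)$ is at least $c\,n\,\delta\,r^d$; choosing $r = C(k/n)^{1/d}$ with $C$ large and applying Bernstein's inequality yields $P\{R_k(\bar{x}_n)>r\}\le \exp(-c'k)$. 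A union bound over a finite $r$-net of $E$ (of polynomial-in-$n$ cardinality), together with the hypothesis $k/\log n \to\infty$, promotes this to $\sup_{\bar{x}_n\in E} R_k(\bar{x}_n) = O((k/n)^{1/d})$ almost surely. Since $k/n\to 0$, this gives $E[B_n^2] = O((k/n)^{2/d})\to 0$.

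The noise-averaging term $V_n$ is a sum of $k$ innovations taken at random but distinct time indices $t_i+1$. By the i.i.d.\ structure of $\{\varepsilon_t\}$ together with a careful conditioning on the random index set (using that the innovations remain mean zero and pairwise uncorrelated conditional on that set), one obtains $E[V_n^2] \le \sigma^2/k + o(1/k)$, which vanishes because $k\to\infty$. Assembling the pieces yields $E(\hat{x}_{n+1}-x_{n+1})^2 = \sigma^2 + O((k/n)^{2/d}) + O(1/k)$ uniformly on $E$, proving the claim.

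The main technical hurdle is the uniform rate for $R_k(\bar{x}_n)$ over $\bar{x}_n\in E$. It simultaneously requires (i) upgrading the weak-dependence hypothesis (A1) to a usable concentration inequality --- handled by the i.i.d.-subsequence trick, for which the large but fixed ``independence gap'' in (A1) is exactly what is needed; (ii) a uniform density lower bound derived from the monotone-tail property (A3) plus boundedness of $E$; and (iii) balancing a polynomial union bound against the exponential Bernstein tail $e^{-c'k}$, which is precisely what the hypothesis $k/\log n \to\infty$ is designed to enable. Once this rate is secured, the remaining steps reduce to routine bias-variance bookkeeping under Lipschitz continuity and the i.i.d.\ noise assumption.
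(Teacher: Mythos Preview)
Your decomposition into bias $B_n$, noise-average $V_n$, and intrinsic noise $\varepsilon_{n+1}$, followed by Lipschitz control of $B_n$ through the $k$-th nearest-neighbor radius and an $O(1/k)$ bound on $V_n$, is exactly the skeleton the paper uses. The one substantive difference is in how the $k$-NN radius is shown to vanish uniformly on $E$. The paper does not build this from scratch: it observes that $\pi_n(\bar x)=k/(n\cdot Vol_k(\bar x))$ is the $k$-NN density estimator and invokes its strong uniform consistency (Devroye, 1977) under $k/n\to 0$, $k/\log n\to\infty$; since $\pi$ is bounded below on the bounded set $E$ by (A3), $\pi_n\to\pi$ forces $Vol_k(\bar x)\to 0$ uniformly. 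Your route via Bernstein concentration on an i.i.d.\ subsequence plus a covering/union-bound argument is more self-contained and delivers the explicit rate $R_k=O\bigl((k/n)^{1/d}\bigr)$, which the paper's citation-based argument does not extract; in exchange, the paper's argument is shorter and sidesteps the net bookkeeping. Either way the growth hypotheses on $k$ enter at the same places and the conclusion is the same.
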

\begin{proof}
  Our proof is based on constructing the nearest neighbor density estimator. Under assumption (A1), there are $n'=O(n)$ historic points independently distributed  according to  probability density function $p(\bar{x})$, the stationary distribution of $d$ consecutive points. Without loss of generality we write $n'$ as $n$, and suppose that $\bar{y}_{1},\cdots,\bar{y}_{n}$ are such points. Let $\pi_n(\bar{x})=k/(n \cdot Vol_k(\bar{x}))$ be the nonparametric density estimator at point $\bar{x}$, where $Vol_k \sim |\bar{x}-\bar{x}^{k}|^d $ is the volume of the ball with radius the distance between $\bar{x}$ and its $k$th nearest neighbor. It is well known that 
  \begin{align} \label{eq2}
    \sup_{n \rightarrow \infty} |\pi_n(\bar{x}) - \pi(\bar{x}) | = 0 \ a.s.
  \end{align}
  given $k=o(n)$ and $\log(n)=o(k)$\cite{devroye1977strong}.  
  Given any bounded subspace $E$, assumption (A2) implies that there exists a constant $\tau > 0$ such that $f(\bar{x}) \geq 2\tau $ for all $\bar{x} \in E$. In view of (\ref{eq2}), there exists $n_0$ such that for all $n>n_0$ and $\bar{x} \in E$, $Vol_k(\bar{x})<k/(n(f(\bar{x})-\tau)) \leq k/(n\tau)$. Thus, $Vol_k(\bar{x})$ converges uniformly to zero as $n$ tends to infinity.
  
  If the data points and additive noises after each $\bar{y}_i , \ i=1,\cdots,n$ are respectively denoted by $y_{i}$ and $\epsilon_i$, then the algorithms gives $\hat{x}_{n+1}=\sum_{i=1}^n y_{i}/n$ which satisfies  
  \begin{align*}
    \hat{x}_{n+1} - x_{n+1} 
    &= \frac{1}{n}\sum_{i=1}^n y_{i} - x_{n+1}\\
    &= \frac{1}{n}\sum_{i=1}^n (f(\bar{y}_i)+\epsilon_i) - (f(\bar{x})+\varepsilon_{n+1}) \\
    &= \frac{1}{n}\sum_{i=1}^n (f(\bar{y}_i) - f(\bar{x})) + \frac{1}{n}\sum_{i=1}^n \epsilon_i  - \varepsilon_{n+1} 
  \end{align*}  
  Let $K$ denote the Lipschitz constant. 
  It follows from $|\sum_{i=1}^n (f(\bar{y}_i) - f(\bar{x}))/ n | \leq K|\sum_{i=1}^n \bar{y}_i/n - \bar{x} |$ and the law of large numbers that $E(\hat{x}_{n+1} - x_{n+1})^2 \rightarrow \sigma^2 $ as $n$ goes to infinity.  
\end{proof}

\begin{remark}
Theorem \ref{thm:1} shows that for a wide range of $k$ the prediction error produced by Algo.~1 converges fast to the theoretical lower bound. The choice of $k$ that achieves the optimal rate of convergence depends on the specific data generating process which is usually unknown. But from a number of synthetic data experiments, we found that $k = \sqrt{n}$ is generally a good choice. Algo.~1 will be used in the method introduced in the next section. 	
\end{remark}

\section{Sequential Prediction  
  from Multiple Resolutions}

In this section, we study a specific scale information which is the data resolution. 
In different domains of research such as the environmental science and finance, it has been pointed out that model predictability may depend heavily on which resolution of data has been used for model fitting; in particular, high resolution (e.g., hourly data) is not necessarily better than low resolution (e.g., daily data) in terms of prediction \cite{shen2015influence,wang2013accelerating,brooks2014introductory}.
It is a natural idea to predict using each resolution, and then combine the results properly in order to achieve a more reliable prediction. In some sense, similar idea appears in model averaging methods\cite{hoeting1999bayesian}. 

We propose Algo.~\ref{alg:2} referred to as the multiple resolution based sequential prediction and learning in time series (Mr-split).
It is worth mentioning that we do not make a strict definition on ``data at resolution $r$'', as for a practitioner there are various possible definitions.
In the pseudo-code of Algo.~\ref{alg:2}, we use $y_t^{(r)}$ to denote the  pre-processed data at resolution $r$ in general. 
For example, $y_{t,n}^{(r)}$ can be $y_{t,n}^{(r)} = x_{rt}$, $t=1,\ldots,n/r$. 
We are going to provide more examples in the real data experiments. 
We note that $y_{t,n}^{(r)}$ is not necessarily a function of $x_n$, but can be exogenous variables as well.

The basic idea of Algo.~\ref{alg:2} is to linearly combine the predictor estimated from each resolution weighted by their accumulated scores, and then update the score according to some carefully chosen loss function $\ell$. The loss function takes the predictor and the revealed true value as arguments, and is used to penalize those resolution that does not perform well. 
 
Algo.~\ref{alg:2} is motivated by the exponential weight algorithm from online learning literature. However, it differs with the classic one in terms of the ultimate goal. In the exponential weight algorithm, a loss function, which is convex and bounded, is pre-defined as the performance measure. In the time series prediction, an analyst is more interested in the mean prediction error, which is usually unbounded. 
A straightforward way, as was used in (\ref{god}) of Algo.~2, is to impose a constant upper bound $C$ which we refer to as the capacity parameter.
The choice of $C$ here is essential. 
As we shall see from the proof, the optimal choice of $\eta$ is always proportional to $1/C$.
On one side, if $C$ is too large, then the score varies little at each time $n$, which means that the learning procedure is slow. On the other side, if $C$ is too small, then most of the time the loss for each resolution $r$ is constant $C$, which means that the learning procedure may not be effective in distinguishing which resolution performs better than others. 
Under mild assumptions, we prove in the following theorem that the optimal $C$ and $\eta$ should be $\Theta(N^{1/3})$ and $\eta = 1/ (C \sqrt{N}) = \Theta(N^{-5/6})$, respectively. 

\begin{algorithm}[tb]
\vspace{0.0 cm}
\caption{Multiple Resolution based Sequential Prediction and Learning in Time Series (\textbf{Mr-split})}
\label{alg:2}
\begin{algorithmic}[1]
\INPUT data $\{y^{(r)}_{t,n}:t=1,2,\ldots\}$ sequentially observed ($n=1,2,\ldots$) at each resolution $r=1,\ldots,R$; initial score for each resolution $w_{r,1}=1$; capacity parameter $C$.  

\OUTPUT  predictor $\hat{x}_{n}$ at each $n=1,2,\ldots$ 
\FOR {$n=1,2,\ldots$}
   \STATE For each resolution $r$, predict $\hat{x}_{r,n}$, from $\{y^{(r)}_{t,n}:t=1,2,\ldots\}$ using Algo.~\ref{alg:1}
   \STATE Predict $\hat{x}_{0,n} = (\sum_{r=1}^R w_{r,n} \hat{x}_{r,n}) / (\sum_{r=1}^R w_{r,n}) $
   \STATE Given the revealed true value $x_{n}$, update the score
   $w_{r,n+1} = w_{r,n} \exp[ -\eta \ \ell(\hat{x}_{n}, x_{n}) ]$, where 
   \begin{align}
   \ell(\hat{x}_{n}, x_{n}) = \max \bigl\{(x_{n}-\hat{x}_{n})^2, C \bigr\} \label{god}
   \end{align}
\ENDFOR
\end{algorithmic}
  \vspace{0.0cm}%
\end{algorithm}

Before we proceed, we make the following assumption. 

(A5)
Assume that $\{\hat{x}_{r,n}-x_n\}$ converges to a strictly stationary stochastic process with second moment, for each $ r=0,\ldots,R $, as $N$ tends to infinity. Assume that the fourth moment of $\{\hat{x}_{0,n}-x_n\}$ exists for $n=1,\ldots,N$. 

We note that $\hat{x}_{0,n}$ denotes the predictor of Mr-split at time $n$, as shown in Algo.~2.  
By ergodicity, for each $ r=0,\ldots,R$, the mean prediction error $ \sum_{n=1}^{N} E\{(\hat{x}_{r,n}-x_n)^2\} / N$ has a limit, denoted by $\sigma_r^2$, as $N$ tends to infinity. 
Suppose that $r_0 \in \{1,\ldots,R\}$ is the resolution that achieves the minimal prediction error among all the $R$ resolutions, i.e. $\sigma_{r_0}^2 \geq \sigma_{r}^2, \ r=1,\ldots,R$.

The question is whether $\sigma_{0}^2$ produced by Algo.~2 can achieve the lowest error $\sigma_{r_0}^2$.

\begin{theorem} \label{thm:online}
Under Assumption (A5), suppose that we choose the capacity
$\C = c_1 N^{1/3} $ and $\eta = c_2 N^{-5/6}$ for some constants $c_1,c_2$ that do not depend on $N$,
then Algo.~2 achieves the lowest possible prediction error as $N$ tends to infinity, i.e., 
$
\sigma_{0}^2 \leq \sigma_{r_0}^2.
$
\end{theorem}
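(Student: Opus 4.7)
The plan is to run the standard regret analysis of the exponentially weighted average forecaster against the capped per-resolution losses $\tilde\ell_{r,n}=\min\{(\hat{x}_{r,n}-x_n)^2,\C\}\in[0,\C]$, connect the resulting bound back to the true mean squared error by controlling the truncation tail through the fourth-moment part of (A5), and finally choose $\C$ and $\eta$ to balance these two error contributions.

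First, let $p_{r,n}=w_{r,n}/\sum_s w_{s,n}$ so that $\hat{x}_{0,n}=\sum_r p_{r,n}\hat{x}_{r,n}$. Jensen's inequality applied to the convex map $u\mapsto (u-x_n)^2$ gives
\[
(\hat{x}_{0,n}-x_n)^2\le\sum_r p_{r,n}(\hat{x}_{r,n}-x_n)^2=\sum_r p_{r,n}\tilde\ell_{r,n}+\sum_r p_{r,n}\bigl[(\hat{x}_{r,n}-x_n)^2-\tilde\ell_{r,n}\bigr],
\]
separating the error into a bounded forecaster-loss term and a truncation tail supported on $\{(\hat{x}_{r,n}-x_n)^2>\C\}$.

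For the bounded term I would use the classical Hoeffding-type estimate on the potential $\Phi_n=-\eta^{-1}\log\sum_r w_{r,n}$ to obtain path-wise
\[
\sum_{n=1}^N\sum_r p_{r,n}\tilde\ell_{r,n}\le\min_{r'}\sum_{n=1}^N\tilde\ell_{r',n}+\frac{\log R}{\eta}+\frac{\eta\C^2 N}{8}.
\]
Taking expectations, using $\tilde\ell_{r_0,n}\le(\hat{x}_{r_0,n}-x_n)^2$, and invoking the ergodicity in (A5), the comparator on the right is at most $N\sigma_{r_0}^2+o(N)$. For the tail I would use Cauchy--Schwarz followed by Markov:
\[
E\bigl[((\hat{x}_{r,n}-x_n)^2-\C)_+\bigr]\le\sqrt{E[(\hat{x}_{r,n}-x_n)^4]}\,\sqrt{P((\hat{x}_{r,n}-x_n)^2>\C)}=O(\C^{-1/2})
\]
uniformly in $n$, by the finite fourth moment guaranteed by (A5). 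Since $\sum_r p_{r,n}=1$, summing in $r$ and $n$ and combining with the forecaster bound yields $\sum_{n=1}^N E[(\hat{x}_{0,n}-x_n)^2]\le N\sigma_{r_0}^2+O(\log R/\eta+\eta\C^2 N+N/\sqrt{\C})$.

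With the prescribed choices $\C=c_1N^{1/3}$ and $\eta=c_2N^{-5/6}$, each term inside the big-$O$ is $\Theta(N^{5/6})=o(N)$, so dividing by $N$ and sending $N\to\infty$ produces $\sigma_0^2\le\sigma_{r_0}^2$. The main difficulty, and the reason for the unusual exponent $N^{1/3}$, is the interplay between the Hedge regret and the truncation tail: a second-moment Markov bound alone only yields an $O(N/\C)$ tail and suggests the wrong exponent $\C=\Theta(N^{1/4})$, so it is precisely the finite fourth moment in (A5) that sharpens the tail to $O(N/\sqrt{\C})$ and gives the stated $N^{1/3}$ scaling together with $\eta=\Theta(N^{-5/6})$. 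A further point of care is that $p_{r,n}$ is a random history-dependent weight, so the exponential-weights inequality must be invoked path-wise and expectations taken only after summing in $n$.
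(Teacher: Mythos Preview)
Your high-level strategy matches the paper's exactly: a path-wise Hoeffding/Hedge regret bound on the capped losses, a Cauchy--Schwarz plus Markov estimate for the truncation tail, and then balancing the three terms to obtain $\C=\Theta(N^{1/3})$ and $\eta=\Theta(N^{-5/6})$. The one substantive difference is \emph{where} you perform the un-truncation, and this creates a small but genuine gap with respect to Assumption~(A5).

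You first apply Jensen to the raw quadratic $(\hat{x}_{0,n}-x_n)^2\le\sum_r p_{r,n}(\hat{x}_{r,n}-x_n)^2$ and then split each expert's loss into $\tilde\ell_{r,n}$ plus a tail $((\hat{x}_{r,n}-x_n)^2-\C)_+$. To control $E[\sum_r p_{r,n}((\hat{x}_{r,n}-x_n)^2-\C)_+]$ by $O(\C^{-1/2})$ you invoke Cauchy--Schwarz with $E[(\hat{x}_{r,n}-x_n)^4]$ for every $r\in\{1,\dots,R\}$. But (A5) only grants a finite fourth moment for the \emph{aggregated} predictor ($r=0$); for the individual resolutions it assumes merely a second moment. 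With second moments alone your tail bound degrades to $O(\C^{-1})\cdot N$ after Markov, which (as you yourself note) forces $\C=\Theta(N^{1/4})$ and does not recover the stated $N^{1/3}$.

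The paper sidesteps this by keeping the capped loss on both sides of the regret inequality---it applies Jensen to $\ell$ (valid because, conditional on the event $\tilde{A}_n=\{\max_r(\hat{x}_{r,n}-x_n)^2<\C\}$, the cap is inactive and $\ell$ is the convex quadratic) and obtains $E(L_{0,N})\le E(L_{r_0,N})+C\sqrt{N\log R/2}$. On the right it uses the trivial bound $\ell(\hat{x}_{r_0,n},x_n)\le(\hat{x}_{r_0,n}-x_n)^2$, needing only the second moment of the $r_0$-expert. The un-truncation is then done \emph{only} on the left, passing from $E[\min\{Y_0,\C\}]$ to $E[Y_0]=\sigma_0^2$ via Cauchy--Schwarz/Markov, and it is precisely here that the fourth-moment clause of (A5) for $r=0$ is spent. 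Rerouting your argument in this way---bound $\sum_n\min\{(\hat{x}_{0,n}-x_n)^2,\C\}$ rather than $\sum_n(\hat{x}_{0,n}-x_n)^2$, and un-truncate at the end---fixes the gap without changing any of your exponents.
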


\begin{remark}
The result shows that the predictor $\hat{x}_{0,n} $ is at least as good as the best $\hat{x}_{r,n}$ among $r=1,\ldots,R$ asymptotically. 
When $r_0 \in \{1,\ldots,R\}$ is the true data generating model, then the equality $\sigma_{0}^2 = \sigma_{r_0}^2$ holds. This is because the true model asymptotically achieves the theoretical optimal prediction error, so $\sigma_{0}^2 \geq \sigma_{r_0}^2.$
\end{remark}

\section{General Framework}

%
%

We have emphasized on a special type of scale information, the data resolution, in the last section. 
We propose a generic method for scale-based inference  in 
Algo.~\ref{alg:3}.
Below are some additional motivating examples, where $\theta_t $ denotes the scale and $\phi_t$ denotes the  parameter of the data generating process.

\begin{example} 
Consider $X_t = A_1 \sin (w_1 t) +A_2 \sin (w_2 t)+\varepsilon_t $, where $A_1 >> A_2>0, w_1 << w_2$, $\varepsilon_t$ are independent $\mathcal{N}(0,\sigma^2)$ noises. It is equivalent to $X_t \in \mathcal{N}(\theta_t+\phi_t, \sigma^2)$, where $\theta_t=A_1 \sin (w_1 t)$ can be treated as the scale as it varies much slower than $\phi_t=A_2 \sin (w_2 t)$.
\end{example}
Usually, if the parametric model is (luckily) correctly specified, then the unknown parameters ($w_1,A_1,w_2,A_2)$ can be estimated via MLE; otherwise methods such as curve fitting or time series decomposition may be used. 
We are going to apply Algo.~1 to this data in the experiments.

\begin{example}
Consider $X_t = f(X_{t-d_0+1},\cdots,X_{t-1}) + \varepsilon_t$ where $f(\cdot)$ is any linear or nonlinear continuous function.
There is no scale in this case. This example is coined for comparison with the next one.
\end{example}

\begin{example}
  Consider $X_t = f_t(X_{t-d_0+1},\cdots,X_{t-1}) + \varepsilon_t$ where $f_t \in \{f_1 , \cdots, f_s\}$, a finite set of continuous functions.
  This is a mixture model, where the scale parameter $\theta_t$ has a finite support $\theta_t \in \{1,\cdots,s\}$ and parameters $\phi_t$ depends on $\theta_t$.
 \end{example}

\begin{algorithm}[tb]
\vspace{0.0 cm}
\small
\caption{A Generic Algorithm for Scale-based Inference}
\label{alg:3}
\begin{algorithmic}[1]
\INPUT data $\{x_t:t=1,\cdots, n\}$; postulated scale parameters $\theta_n \in \Theta$ and model parameter $\phi_n \in \Phi$, where $\Theta$ and $\Phi$ are respectively the scale space and parameter space
\OUTPUT  the quantity of interest $f: (\bar{x}_{n:n-d+1}, \theta_n, \phi_n) \mapsto \mathbb{R}$  
   \STATE Infer parameter $\hat{\phi}_{n}$ and  $\hat{p} = f(\bar{x}_{n:n-d+1}; \hat{\theta}_n, \hat{\phi}_n) $ at time $n$ 
   from the past data
   \STATE Compute $\hat{\theta}_{n+1}$
\end{algorithmic}
  \vspace{0.0cm}%
\end{algorithm}

\section{Synthetic Data Experiments}

In this section, we present experimental results to demonstrate the theoretical results and the advantages of our methods on various synthetic datasets. 
The codes and related data will be made public online in the future.

\subsection{Mixed Oscillators}

\begin{figure}[htb]
\vspace{-0.2cm}
\begin{center}
\centerline{\includegraphics[width=\columnwidth]{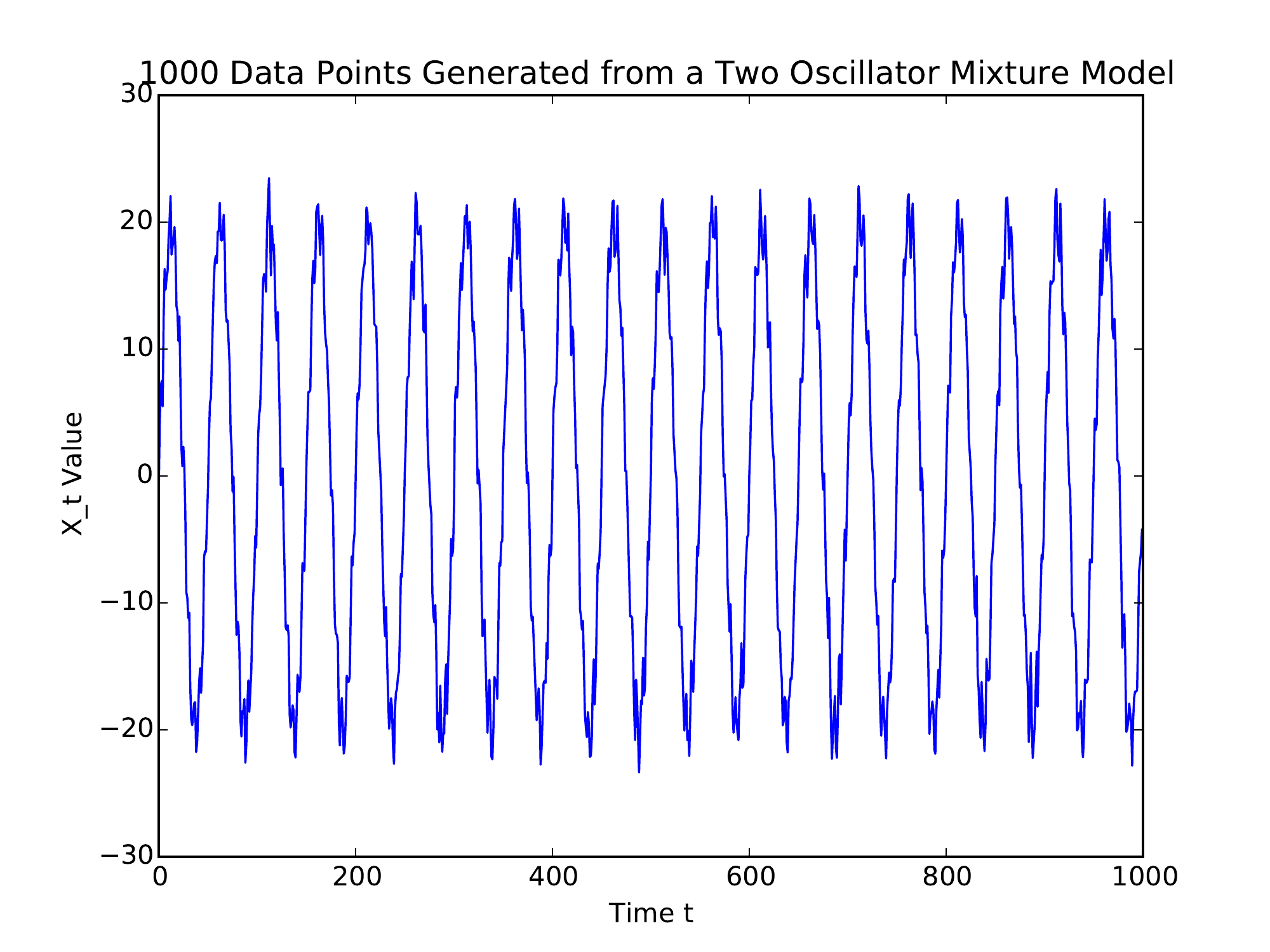}}
\caption{1000 Synthetic data generated from a two oscillators mixture model $X_t = A_1sin(w_1t)+A_2sin(w_2t)+\epsilon_t$, where $A_1=20, A_2=2, w_1=0.04\pi, w_2=0.4\pi$ and $\epsilon_t$ independently identically follows a distribution $\mathcal{N}(0,1)$.}
\label{mixoscillator_synthetic_data}
\end{center}
\vspace{-0.7cm}
\end{figure}

We first consider a case in \textbf{Example 1}. $X_t = A_1sin(w_1t)+A_2sin(w_2t)+\epsilon_t$, where $A_1=20, A_2=2, w_1=0.04\pi, w_2=0.4\pi$ and $\epsilon_t$ independently identically follows a distribution $\mathcal{N}(0,1)$. This is a typical example of fast changing data sitting on a slower changing periodic modulation. \textit{Figure}~\ref{mixoscillator_synthetic_data} shows 1000 data points generated from this mixture oscillator model. We use the first 800 data points as the training set and last 200 data points as the test set. If we luckily specify the correct parametric model and the unknown parameters $A_1,w_1,A_2,w_2$ can be estimated via methods such as MLE, then the prediction error is only from the noise part $\epsilon_t$, with a mean square error (MSE) converging to 1.\\

\begin{figure}[htb]
\vspace{-0.2cm}
\begin{center}
\centerline{\includegraphics[width=\columnwidth]{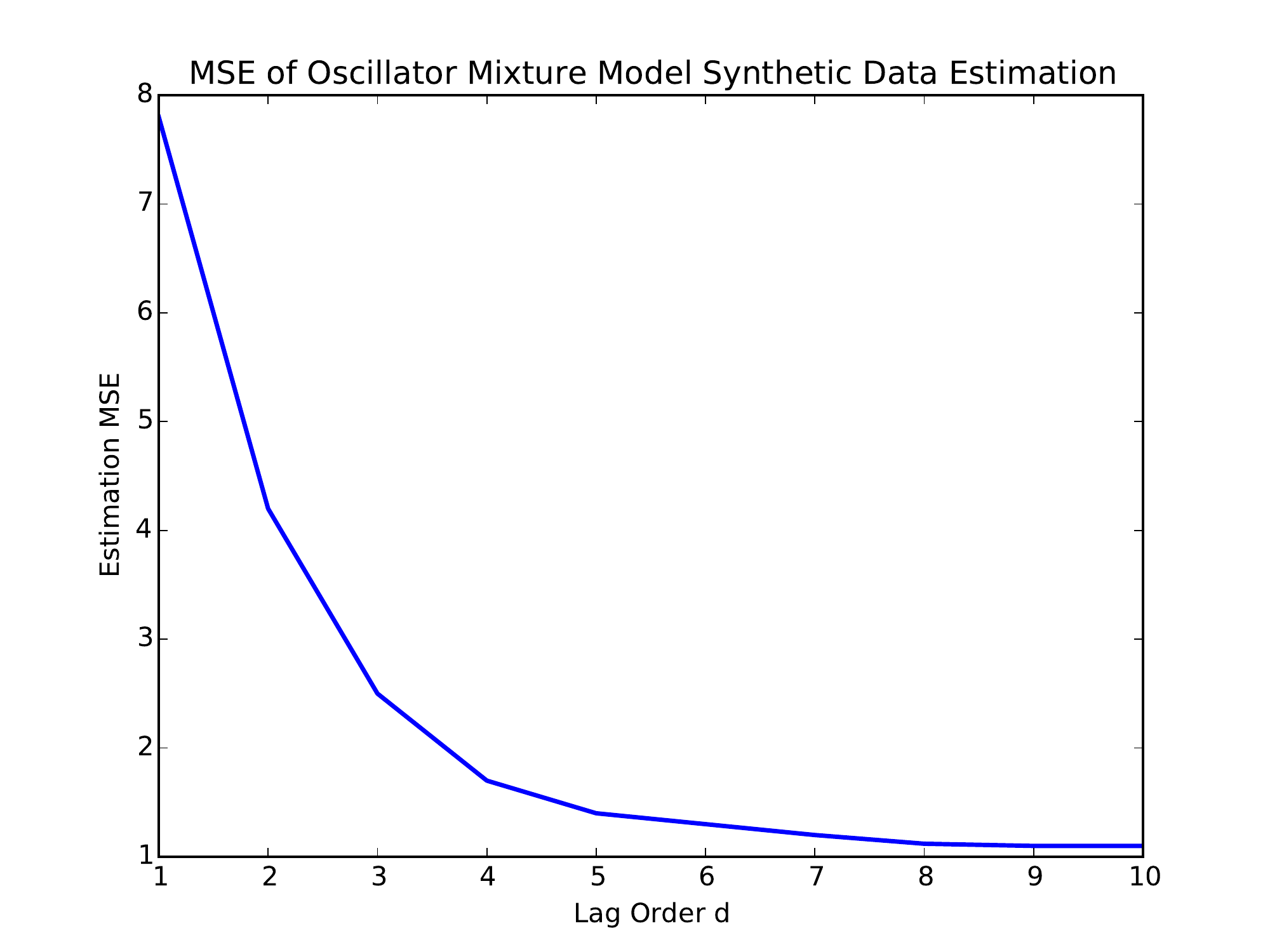}}
\caption{MSE of estimation from Algorithm 1. First 800 data points in \textit{Figure}~\ref{mixoscillator_synthetic_data} are used as the source of matching, and the last 200 data points are used for testing. Horizontal axis is the number of past points we use for pattern matching.}
\label{mixoscillator-estimator-mse}
\end{center}
\vspace{-0.7cm}
\end{figure}

Generally, we don't know the exact form of the model generating the data, specifying the model form incorrectly would lead to an inconsistent estimation.  Now we use \textbf{Algorithm 1}, the  non-parametric model to make prediction. The first 800 data points are used as the training set, and the last 200 data points are used as the test set. \textit{Figure}~\ref{mixoscillator-estimator-mse} shows the error of using different number of past points for pattern matching. The mean squared error(MSE) decreases as the number of past points increase and it converges to 1.10. We analyze why the MSE converges to 1.10 in the example: among the first 800 training data points, there are about 16 repeating patterns. So the number of repeating pattern is limited. When we find the k-nearest neighbors, we have to limit $k<16$ and we choose $k=10$ in the example. Thus the mean of 10 past predictors has variance $0.1$, which is the variance of the mean for 10 independently identically distributed $\mathcal{N}(0,1)$ noise. The variance of our estimation comes from two parts, one from the variance of estimator, and another from the noise in test data, which is $1$ in the case. Adding the variance from test data noise, the total estimation variance is 1.10. When data size increases, we will have larger $k$ when finding the k-nearest neighbors, and the variance of the estimator in \textbf{Algorithm 1} will decrease accordingly and finally converge to 0. Thus the MSE will only come from test data variance, and so MSE will converge to 1 as the data set size increases, as we proved in \textbf{Theorem 1}. This has been tested, and we omit the result here due to the page limit.\\

\subsection{Nonlinear AR(n)}

A nonlinear AR model hasn't got a good way to make prediction unless we know the exact form of the nonlinear dependence. We show here our non-parametric approach gives consistent estimation for nonlinear AR cases. We consider a case in \textbf{Example 2}: a nonlinear AR(3) sequence $X_t = 0.5X_{t-1}-0.1X_{t-2}+0.03X_{t-3}^3+\epsilon_t$, each $\epsilon_t$ independently identically follows a distribution $\mathcal{N}(0,1)$. \textit{Figure}~\ref{1ar3_synthetic_data} shows one instance of 1000 data points generated from the nonlinear AR(3) model, and we use the first 800 data points as the training set and last 200 data points as the test set. It is very difficult to tell from the data plot by eye what the form of the nonlinear dependence will be, and so we use a linear AR(n) model as a baseline without loss of generality.\\

\begin{figure}[htb]
\vspace{-0.2cm}
\begin{center}
\centerline{\includegraphics[width=\columnwidth]{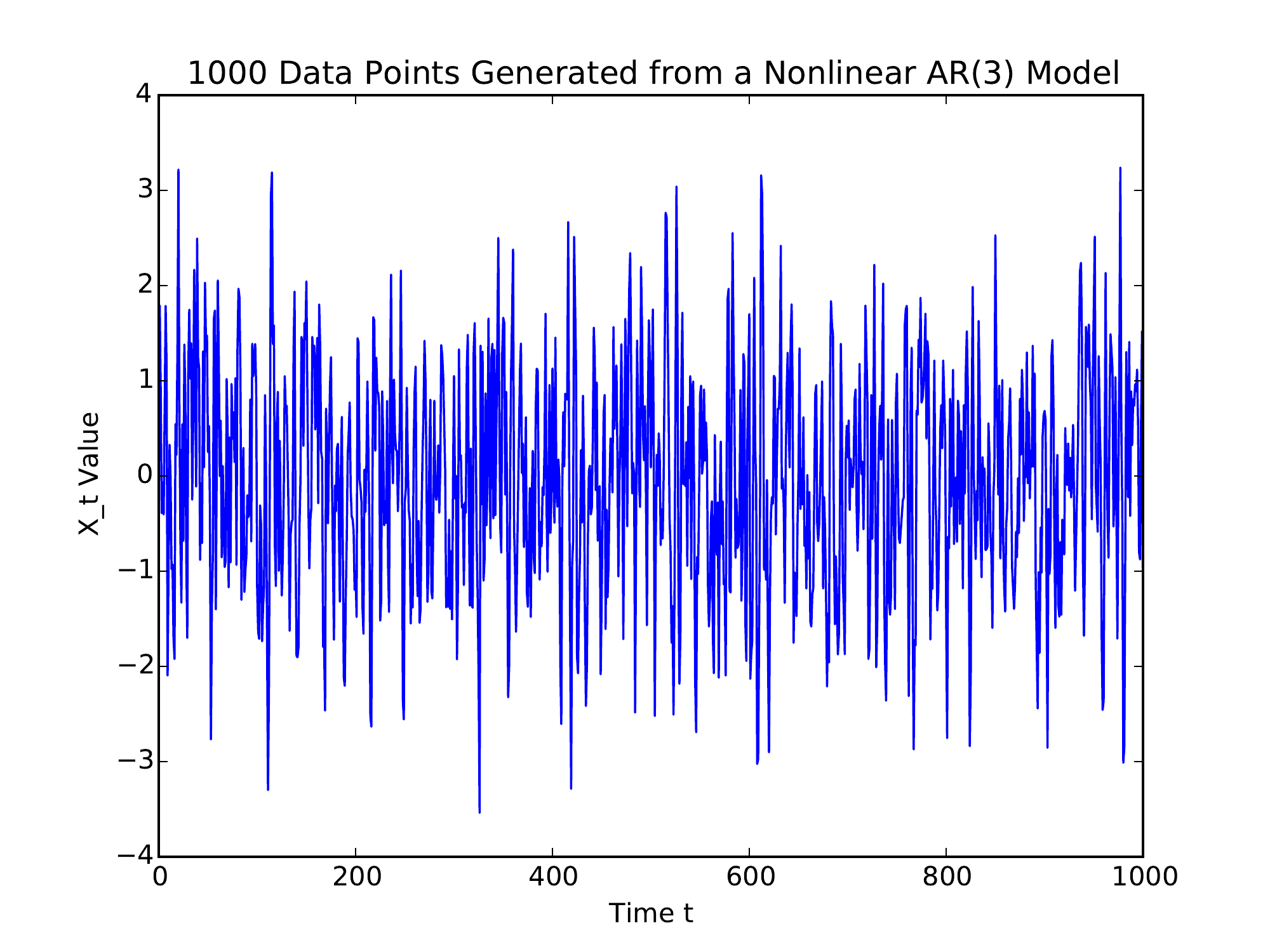}}
\caption{1000 Synthetic data generated from an AR(2) model $X_t = 0.5X_{t-1}-0.1X_{t-2}+0.03X_{t-3}^3+\epsilon_t$, each $\epsilon_t$ independently identically follows a distribution $\mathcal{N}(0,1)$.}
\label{1ar3_synthetic_data}
\end{center}
\vspace{-0.9cm}
\end{figure} 

\begin{figure}[h]
\vspace{-0.2cm}
\begin{center}
\centerline{\includegraphics[width=\columnwidth]{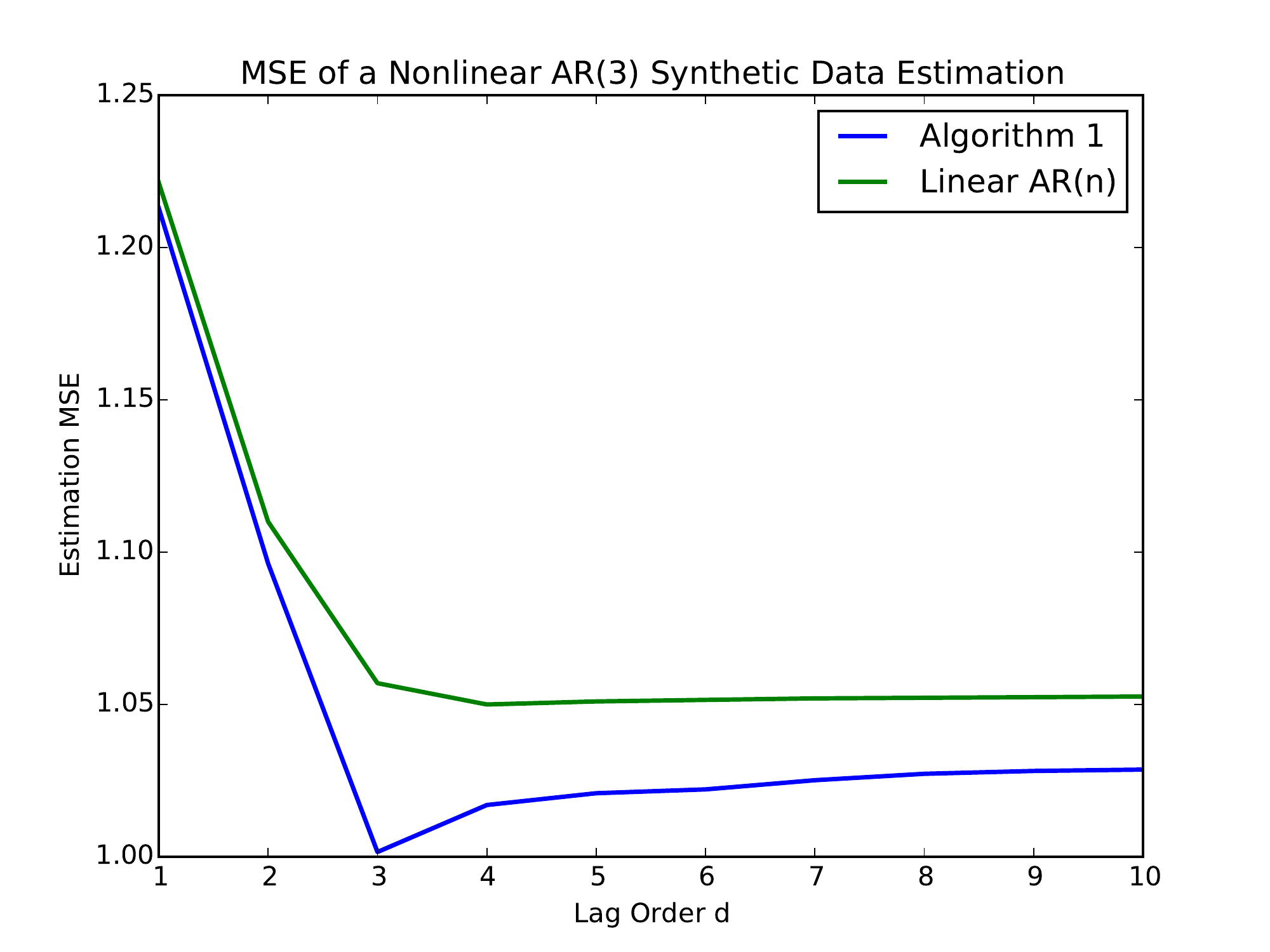}}
\caption{MSE of estimation from Algorithm 1 and linear AR(n) model. First 800 data points in \textit{Figure}~\ref{1ar3_synthetic_data} are used as the training matching set, and the last 200 data points are used for testing. Horizontal axis is the number of past points we use for pattern matching. We repeat the experiment for 100 times, and MSE at each point is the average value from 100 experiments.}
\label{1ar3-estimator-mse}
\end{center}
\vspace{-0.9cm}
\end{figure} 

For each experiment, we use the first 800 data as training data the last 200 data points as test data,  and we repeat the experiment by 100 times. \textit{Figure}~\ref{1ar3-estimator-mse} shows the average MSE for 100 experiments. For linear AR(n) model, the best MSE achieved is1.05 when n=4. For \textbf{Algorithm 1}, the one-scale non-parametric model, the error is minimized when we use the previous 3 data points as predictors. This is intuitively correct because we use a nonlinear AR(3) model to generate the data. The best mean squared error(MSE) is 1.00, and this error converges to the noise variance as we have proved in \textbf{Theorem 1}. Linear AR(n) models are a bit trivial compared to the nonlinear case. So we omit the results of linear AR(n) models.

\subsection{Multi-Resolution AR(n)}

We consider a case in \textbf{Example 3}: a mixture of three AR(2) sequences, one following $X_t = 0.65X_{t-1}-0.25X_{t-2}+\epsilon_t$, one following $X_t = -0.7X_{t-1}-0.6X_{t-2}+\epsilon_t$,  and another following $X_t = 0.6X_{t-1}-0.6X_{t-2}+\epsilon_t$, each $\epsilon_t$  independently identically follows a distribution $\mathcal{N}(0,1)$. The three sequences are mixed such that if  we sample every three data points from the mixed series, these sampled points follows one of the three one of the AR(2) time series. \textit{Figure}~\ref{3_res_synthetic_data} shows one instance of 3000 data points generated from mixing three separate AR(2) sequences, each sequence with 1000 data points. It is very difficult to tell from the data plot what the resolution of the data will be, and we will show our \textbf{Algorithm 2} framework will find out the correct resolution and converges to the correct model.\\

\begin{figure}[h]
\vspace{-0.2cm}
\begin{center}
\centerline{\includegraphics[width=\columnwidth]{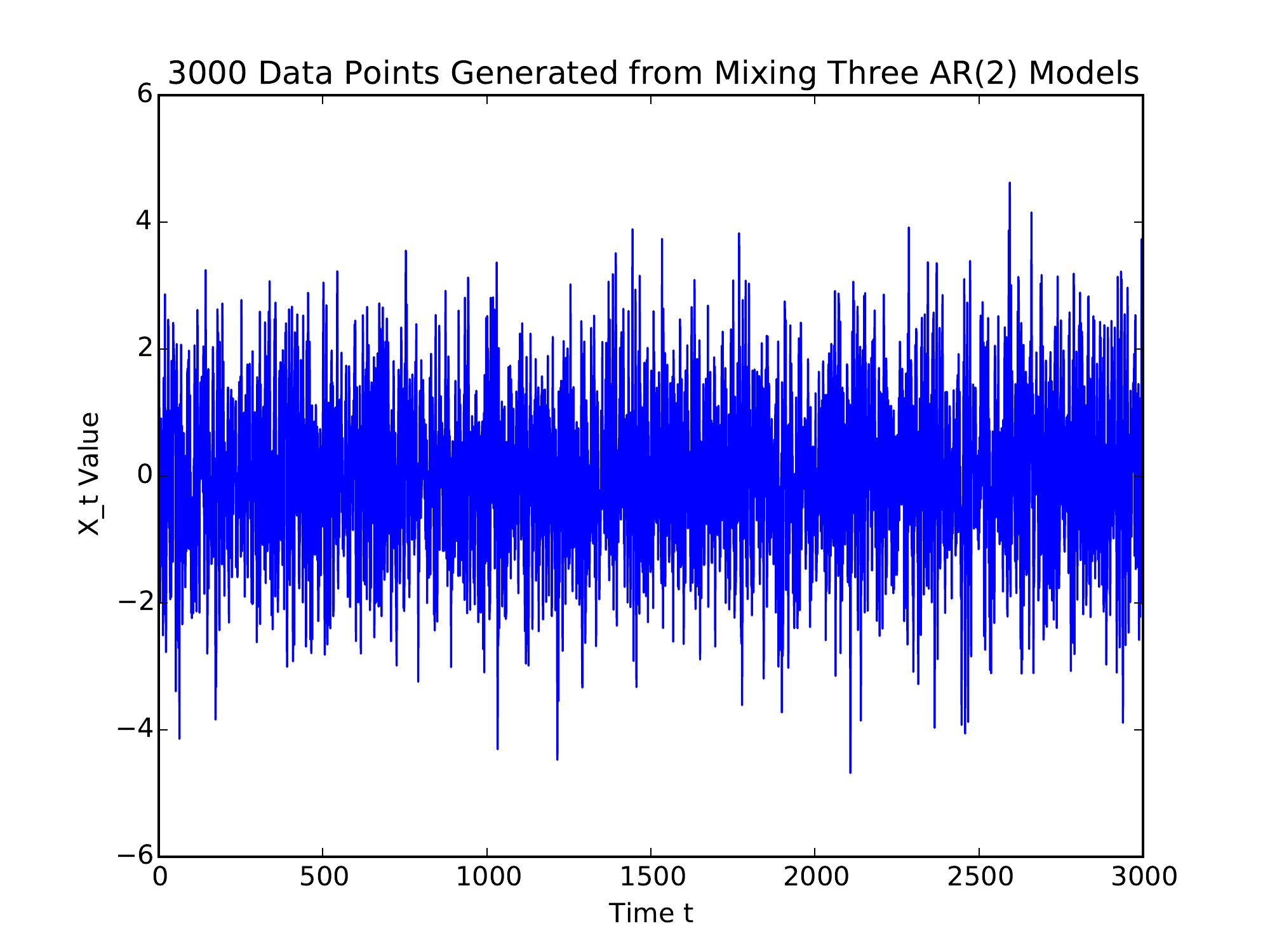}}
\caption{3000 Synthetic data by mixing three separate time series generated from three AR(2) models $X_t = 0.65X_{t-1}-0.25X_{t-2}+\epsilon_t, X_t = -0.7X_{t-1}-0.6X_{t-2}+\epsilon_t, X_t = 0.6X_{t-1}-0.6X_{t-2}+\epsilon_t$, each $\epsilon_t$ independently identically follows a distribution $\mathcal{N}(0,1)$. If we sample every three data points in the combined time series, these sampled points is one of the AR(2) time series.}
\label{3_res_synthetic_data}
\end{center}
\vspace{-0.9cm}
\end{figure} 

\begin{figure}[h]
\vspace{-0.3cm}
\begin{center}
\centerline{\includegraphics[width=\columnwidth]{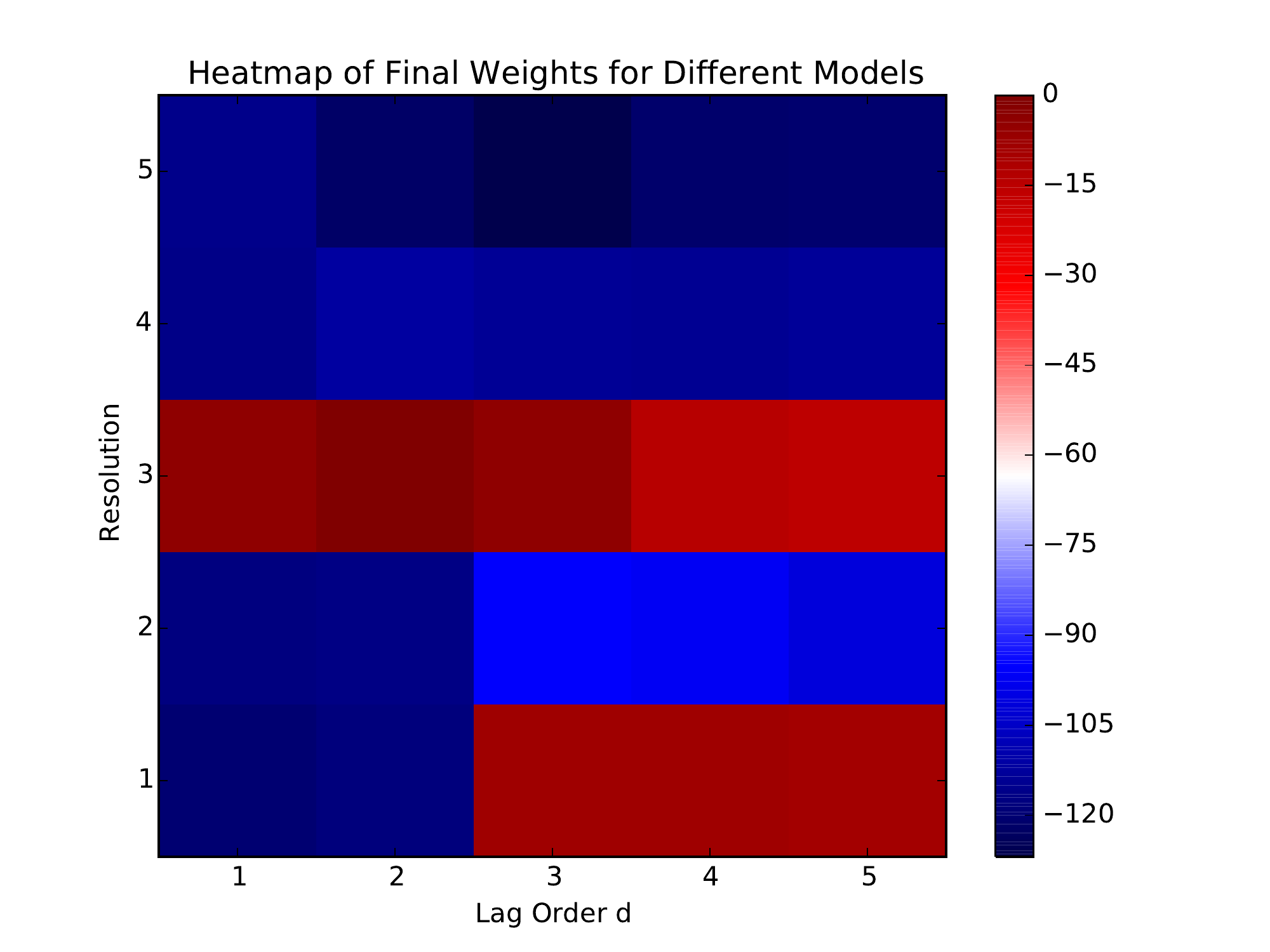}}
\caption{Heatmap shows logarithm of final weights for different models in exponential weighting algorithm. The resolution is from getting data every one data point to every five data points, and the number of predictors is also from one to five. There are 25 models in total. The weights of different models have been taken logarithm. The redder the block, the larger weight the model has. The bluer the block, the smaller weight the model has. }
\label{expweights_final}
\end{center}
\vspace{-1cm}
\end{figure}

\begin{figure}[h]
\vspace{-0.3cm}
\begin{center}
\centerline{\includegraphics[width=\columnwidth]{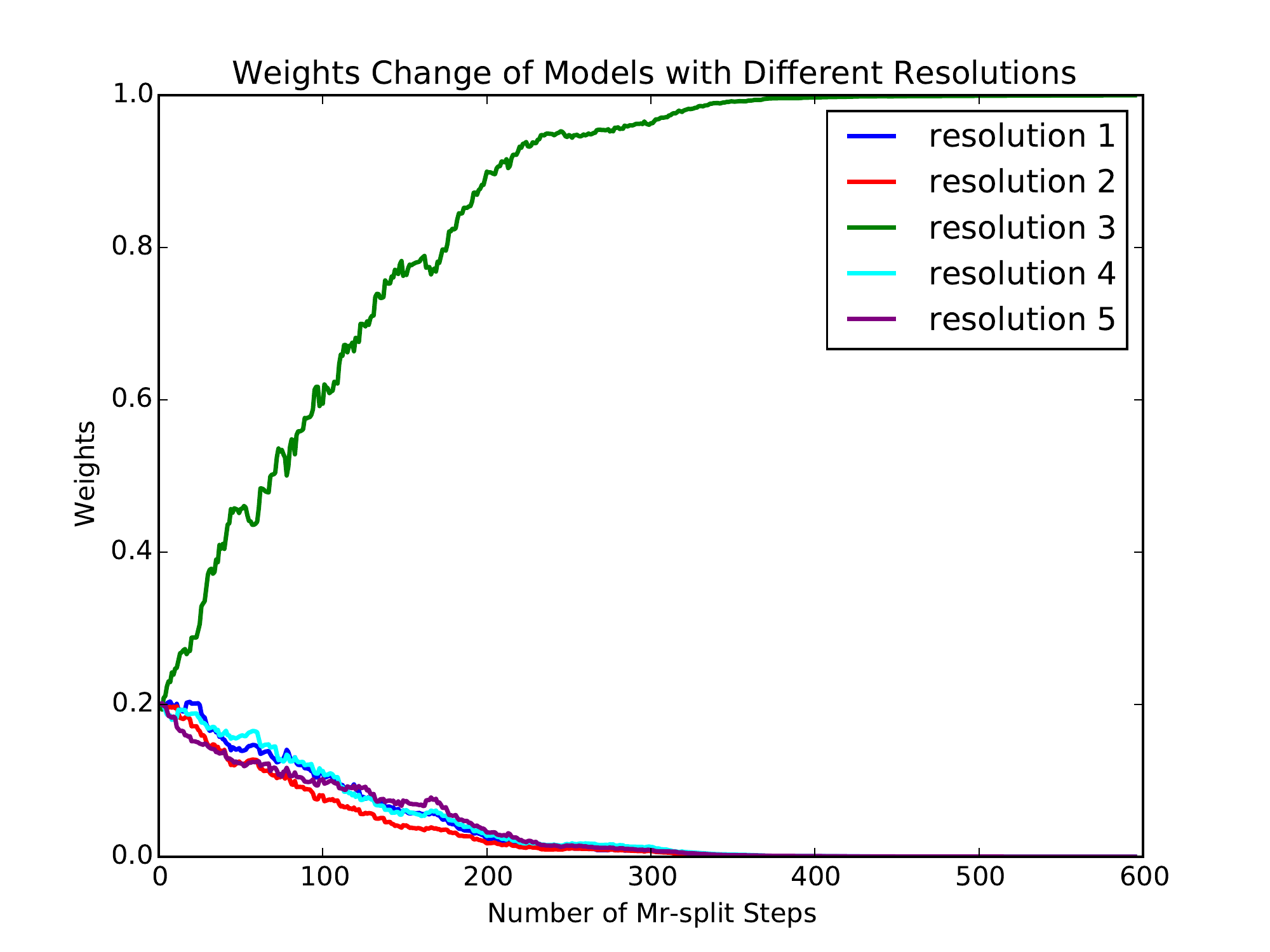}}
\caption{The weights change of different models as we make prediction on test data. First 2400 data points in \textit{Figure}~\ref{3_res_synthetic_data} are used as the training matching set, and the last 600 data points are used for testing. We keep the lag order fixed as the true value 2, and only change the resolution from every one data point to every five data points (five models in total), in order to show how the weight each model changes clearly.}
\label{expweights_change}
\end{center}
\vspace{-1cm}
\end{figure} 

\begin{figure}[h]
\vspace{-0.2cm}
\begin{center}
\centerline{\includegraphics[width=\columnwidth]{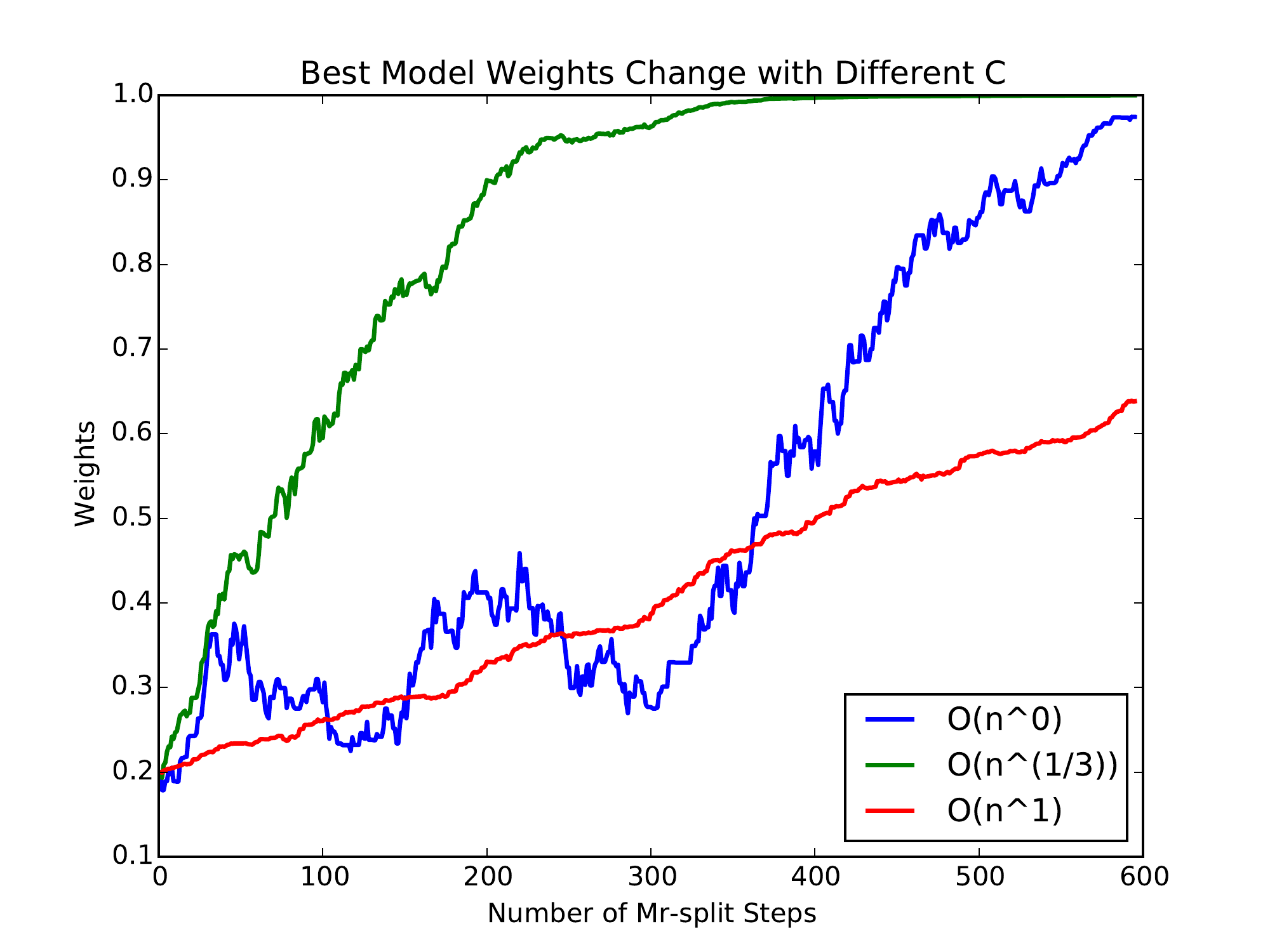}}
\caption{The weights change of the best model as we make prediction on test data by choosing different C. First 2400 data points in \textit{Figure}~\ref{3_res_synthetic_data} are used as the training matching set, and the last 600 data points are used for testing. We keep the lag order fixed as the true value 2, and only change the resolution from every one data point to every five data points (five models in total), in order to show how the weight each model changes clearly.}
\label{Best_Model_Weights_Change_DiffC}
\end{center}
\vspace{-1cm}
\end{figure} 

For each experiment, we use the first 2400 data points as the training set and last 600 data points as the test set, and we repeat the experiment by 100 times. Our algorithm assumes we don't know how many autoregressive time series are mixed or what the form the autoregressive model is or what lag order is. We scan the resolution from 1 to 5, the lag order also from 1 to 5, and there are 25 models in total. \textit{Figure}~\ref{expweights_final} shows the logarithm of final weights for each model at the end of prediction using \textbf{Mr-split} in one instance. The logarithm is taken because the relative weights for different models differ a lot in magnitude. It shows that for model with resolution 3, and using past 2 data points as predictors, has the largest weight. \textit{Figure}~\ref{expweights_change} shows the change of weights as we make predictions for one instance. For the purpose of displaying weights change clearly, we keep the lag order fixed as the true value 2, and only change the resolution from every one data point to every five data points. The correct model weight converges to 1 almost after 400 prediction steps, and weights of other models converge to 0. \textit{Figure}~\ref{Best_Model_Weights_Change_DiffC} compares the convergence rates of the best model for different capacity C values. It shows the convergence rate is optimal when choosing $C=O\left(N^{\frac{1}{3}}\right)$. The average MSE from 100 experiments is 1.00. As we have proved in \textbf{Theorem 2}, the master model converges to the true model estimation error when data size increases, which is 1 in the case, and the optimal $C=O\left(N^{\frac{1}{3}}\right)$.

\section{Real Data Experiment}
In this section, we apply our algorithms to S\&P index real data, and the advantage of our method is revealed. The stock index change could not be effectively modeled by a parametric model. We will test the effect of using multi-scale non-parametric approach to detect patterns from past data and predict future data. We adopt the quandl finance\&economics data base for the evaluation, which is available from \url{https://www.quandl.com/data/YAHOO/INDEX_SPY-SPDR-S-P-500-SPY}. 

\subsection{Daily Index Change Prediction}
Investors are more interested in the price change prediction rather than the absolute index value itself, in this subsection we apply our method to predict the relative daily price changes compared to the previous day. \textit{Figure}~\ref{sp500-dailychange} shows the relative daily change of S\&P 500 index from January, 2006 to December 2015. \\

\begin{figure}[htb]
\vspace{-0.2cm}
\begin{center}
\centerline{\includegraphics[width=\columnwidth]{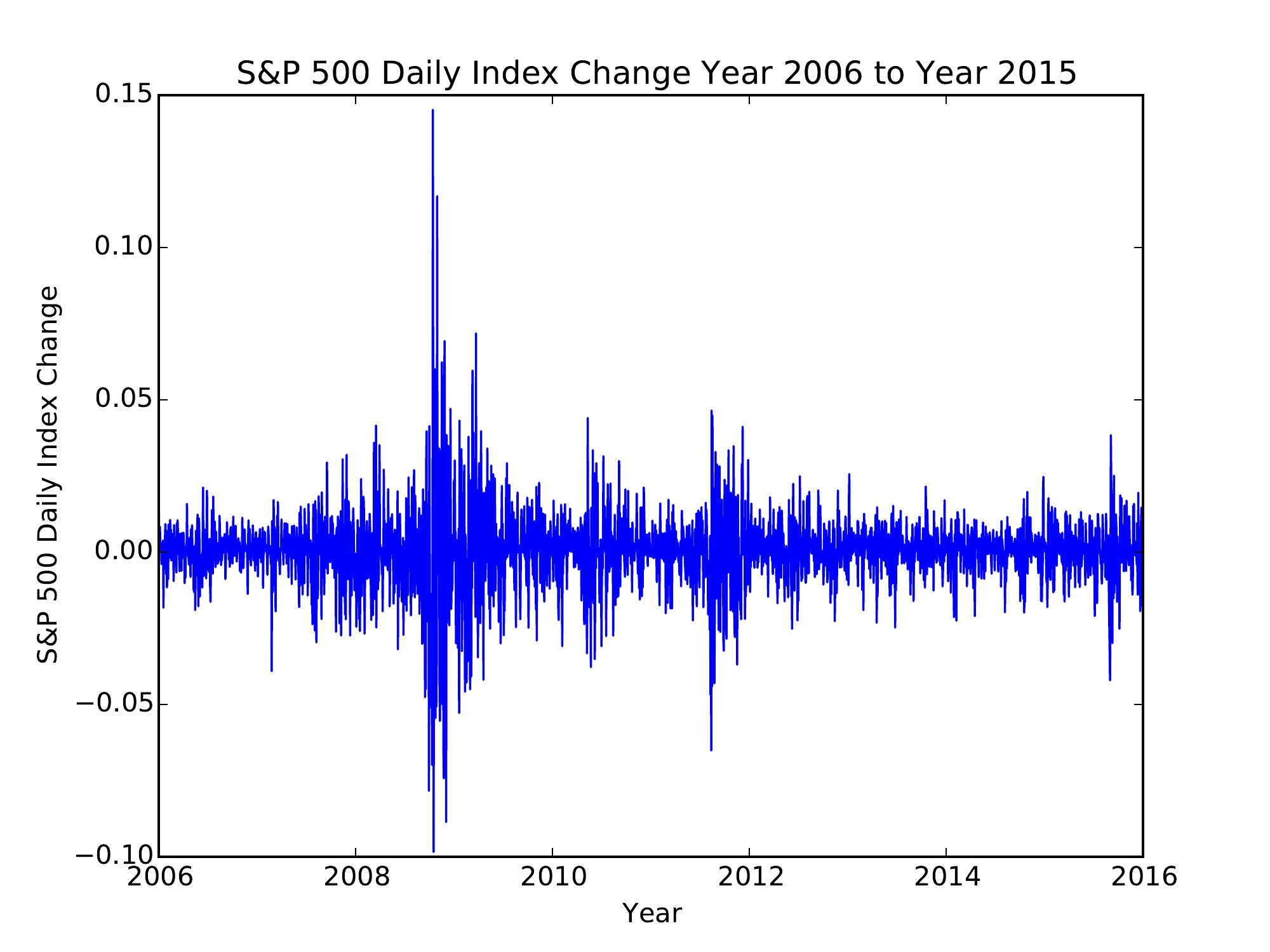}}
\caption{S\&P 500 daily index change from Year 2006 to Year 2015.}
\label{sp500-dailychange}
\end{center}
\vspace{-0.9cm}
\end{figure} 

\begin{figure}[h]
\vspace{-0.3cm}
\begin{center}
\centerline{\includegraphics[width=\columnwidth]{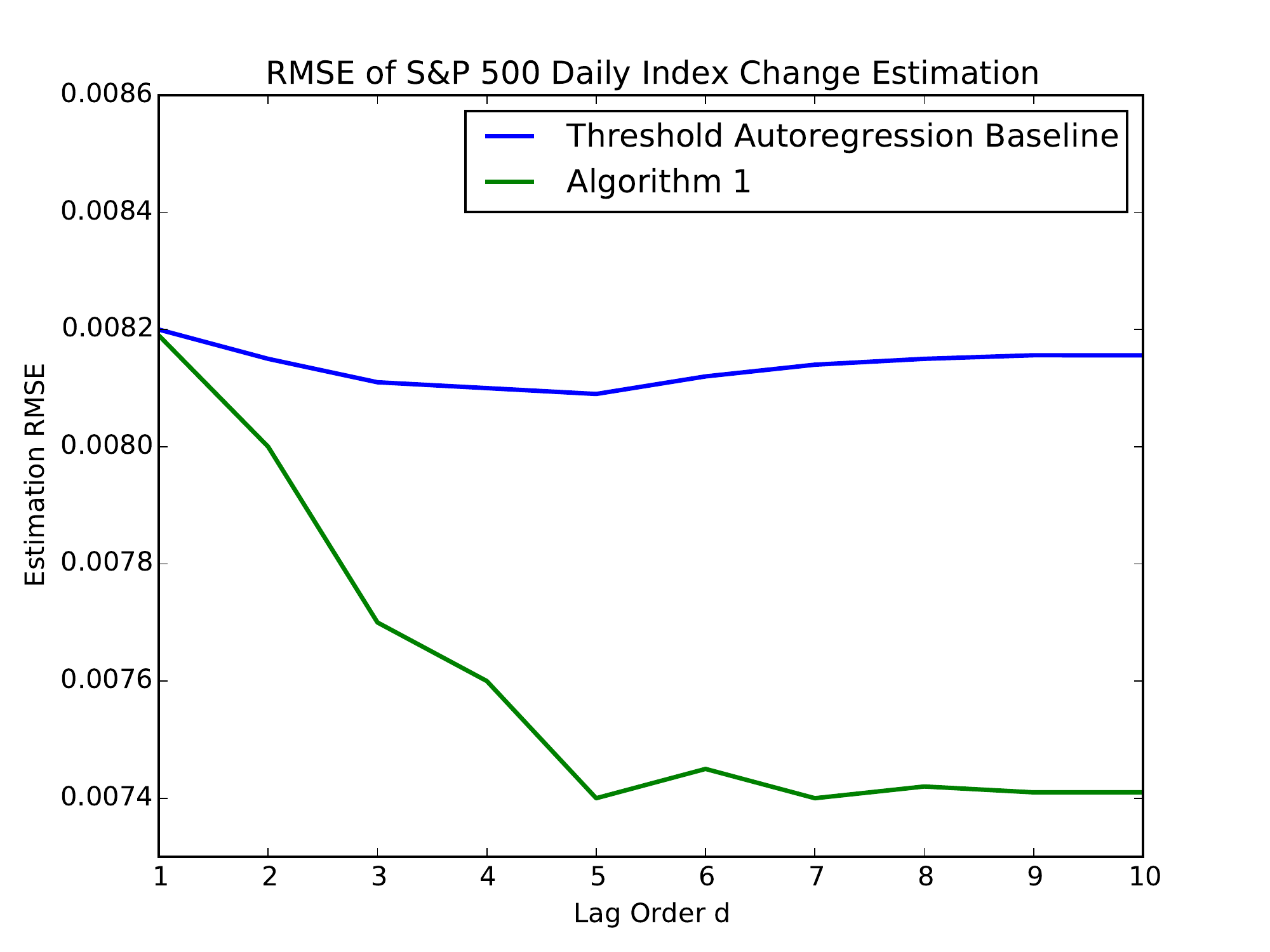}}
\caption{S\&P 500 daily index change estimation RMSE using \textbf{Algorithm 1} (green) and threshold autoregression model (blue). The horizontal axis is the number of past days we use to predict the next day. Training data: Year 2006 -- Year 2013, test data: Year 2014 -- Year 2015.}
\label{sp500-estimator-rmse}
\end{center}
\vspace{-0.9cm}
\end{figure} 

\begin{figure}[htb]
\vspace{-0.3cm}
\begin{center}
\centerline{\includegraphics[width=\columnwidth]{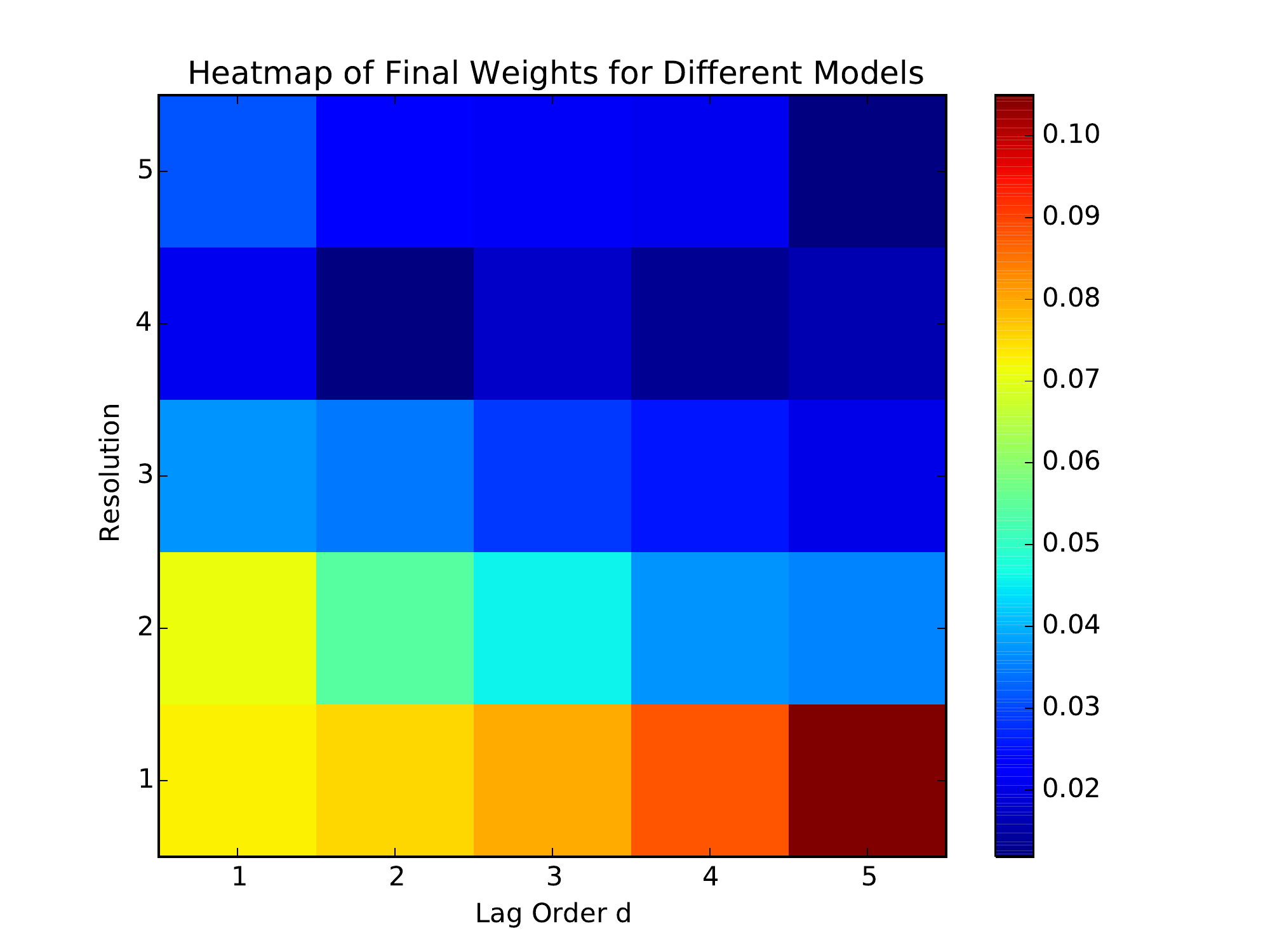}}
\caption{Heatmap shows final weights for different models in exponential weighting algorithm. The resolution is from every one data point to every five data points, and the lag order is also from one to five. There are 25 models in total. The redder the block, the larger weight the model has. The bluer the block, the smaller weight the model has. }
\label{sp500-expweights}
\end{center}
\vspace{-0.9cm}
\end{figure} 

\begin{figure}[htb]
\vspace{-0.3cm}
\begin{center}
\centerline{\includegraphics[width=\columnwidth]{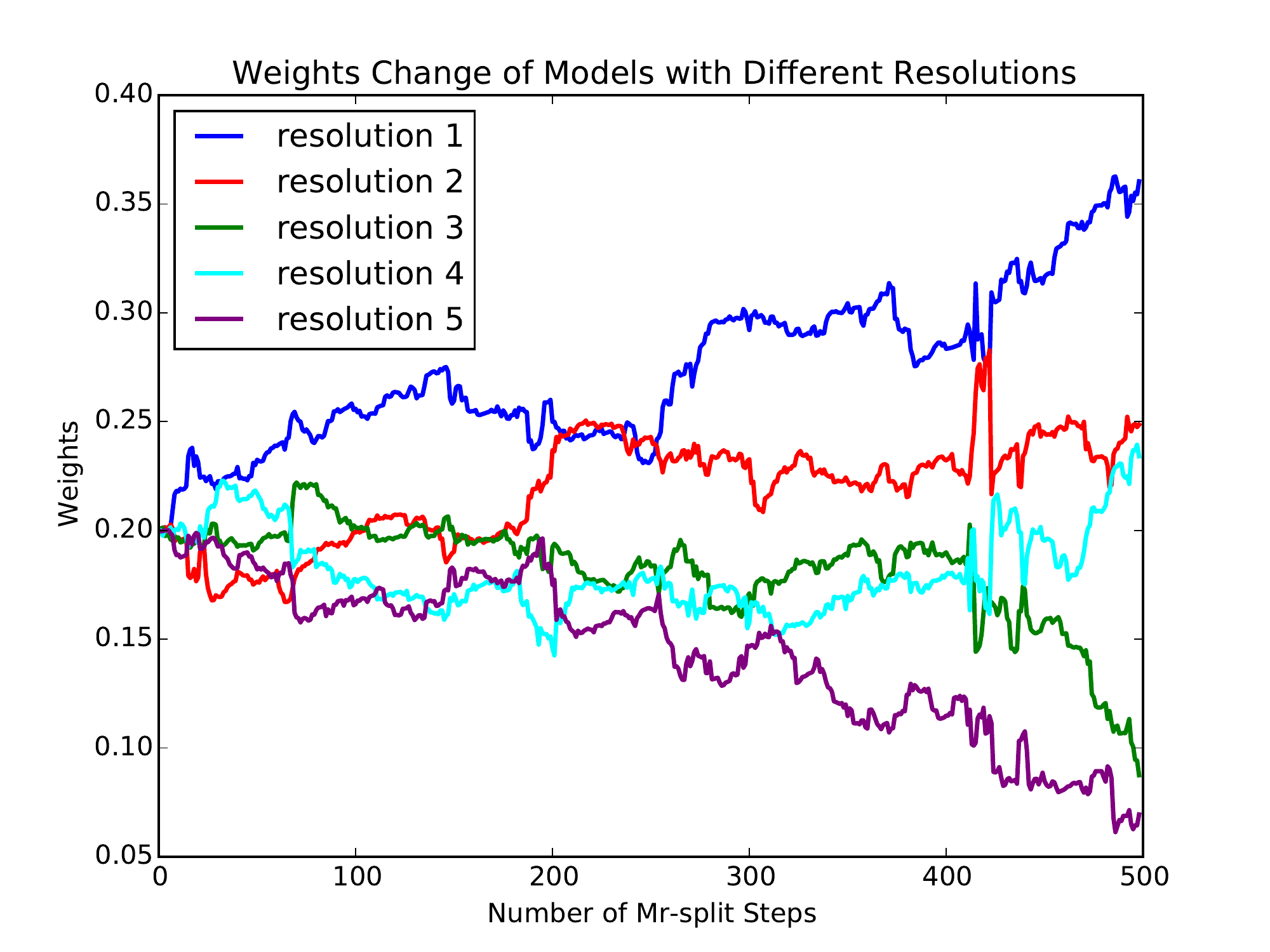}}
\caption{The weights change of different models as we make prediction on test data. We keep the lag order fixed as 5, and only change the resolution from every one data point to every five data points (five models in total), in order to show how the weight each model changes clearly.}
\label{sp500_daily_expweight_change}
\end{center}
\vspace{-0.8cm}
\end{figure} 

It is significant from the data plots that for different time period, the index change behaves very differently. In some regions, the value tends to be big and volatile, and in some other regions, the value tends to be small and stable. Threshold autoregressive (TAR) model  is widely used  to describe this phenomenon in economics. We use a one threshold autoregressive model as the baseline. Data in Year 2006 to 2013 is the training set, and data in Year 2014 and 2015 is the test set. Its prediction RMSE as a function of number of predictors is shown by the blue line in \textit{Figure}~\ref{sp500-estimator-rmse}. The horizontal axis is the number of past days we use to predict the next day. The optimal RMSE is around 0.0081 when using past 5 days data to make prediction. The green line shows the RMSE using \textbf{Algorithm 1}.  The RMSE of our non-parametric model has consistent smaller RMSE than TAR model. This is because the non-parametric approach implicitly takes possible non-linear dependence on past data into consideration, and TAR only considers the linear dependence. The RMSE is minimized to 0.0074 when using past 5 days data to make prediction. 

\textbf{Mr-split} is used to take the multiple resolution possibility into consideration. The resolution is from every one data point to every five data points, and the lag order d is also from one to five. There are 25 models in total. \textit{Figure}~\ref{sp500-expweights} shows the final weight on each model. The redder block corresponds to a larger weight to the model, and the bluer block corresponds to a smaller weight to the model. Weights are heavy for resolution 1 and this implies stock index is mostly affected by recent daily data. For resolution 1, weight is heaviest when using past 5 days data to make prediction, and so our \textbf{Mr-split} converges to the best prediction. \textit{Figure}~\ref{sp500_daily_expweight_change} shows the change of weights as we make predictions on the daily index change. For the purpose of displaying weights change clearly, we keep the lag order fixed as 5, which is the lag order minimize RMSE from \textbf{Algorithm 1}, and only change the resolution from every one data point to every five data points. The resolution 1 model, namely every one data point, gains weight as we make prediction.The optimal RMSE we get is 0.0073. This is significantly better than using traditional TAR model and even slightly better than a single best model. The reason could be some models are better at some period of time, and some better at other period of time. \\

\subsection{Monthly Average Index Change Prediction}

In this subsection, we consider estimate the monthly average index change. We define the monthly average index change as the first 20 trading days daily index change average. \textit{Figure}~\ref{sp500-monthly-avg-change} shows the relative daily change of S\&P 500 index from January, 2006 to December 2015. We are interested in if a low resolution prediction, namely using past month average index change data to make prediction better, or if a high resolution prediction, namely, for example, we predict everyday index changes in a month, and use their average as monthly average prediction better. A higher resolution prediction requires more information from data to make prediction and a higher resolution prediction appears better intuitively, but our algorithm indicates this is not necessarily true.\\

\begin{figure}[htb]
\vspace{-0.2cm}
\begin{center}
\centerline{\includegraphics[width=\columnwidth]{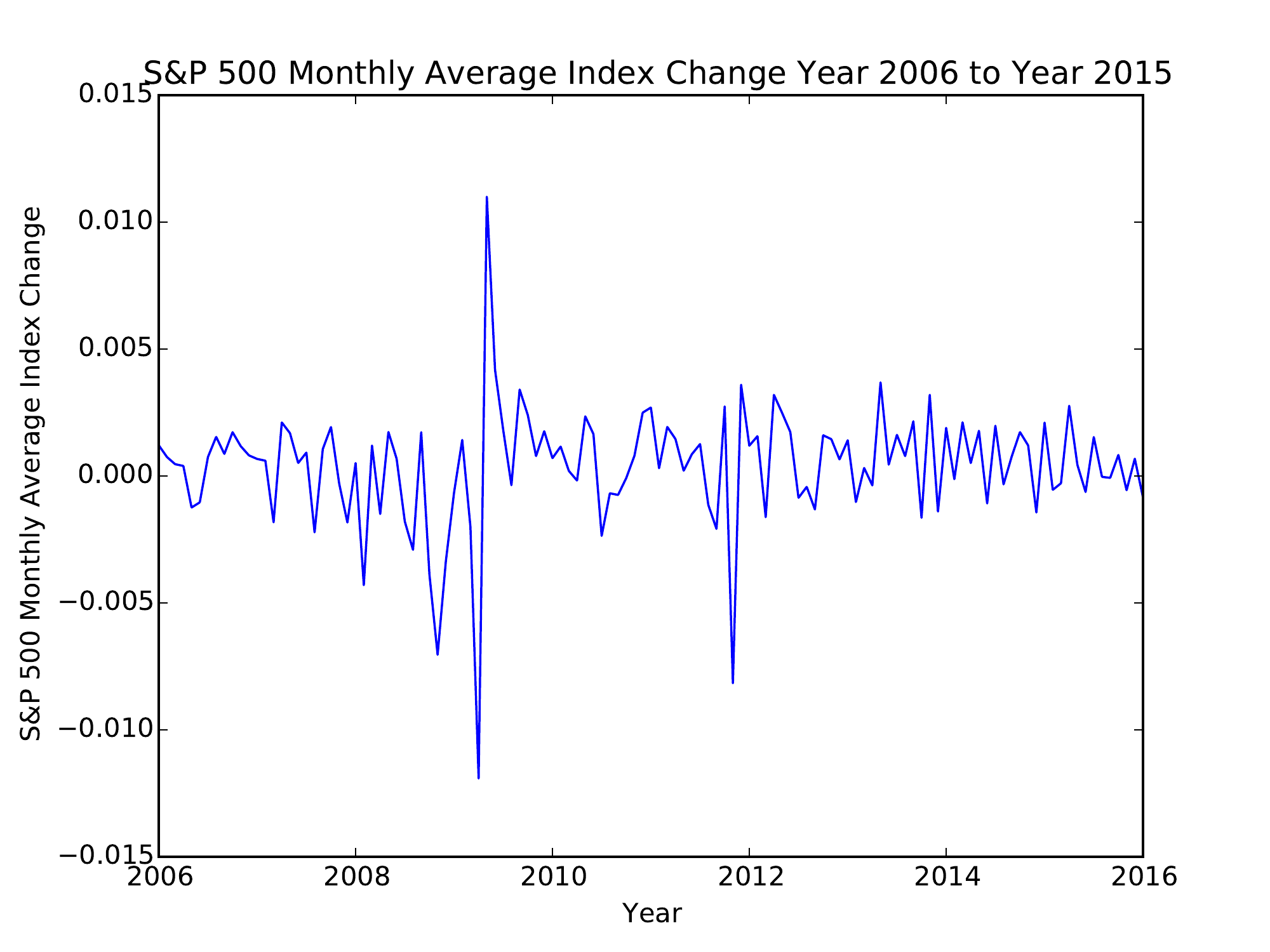}}
\caption{S\&P 500 monthly average index change from Year 2006 to Year 2015.}
\label{sp500-monthly-avg-change}
\end{center}
\vspace{-1cm}
\end{figure}

\begin{figure}[htb]
\vspace{-0.2cm}
\begin{center}
\centerline{\includegraphics[width=\columnwidth]{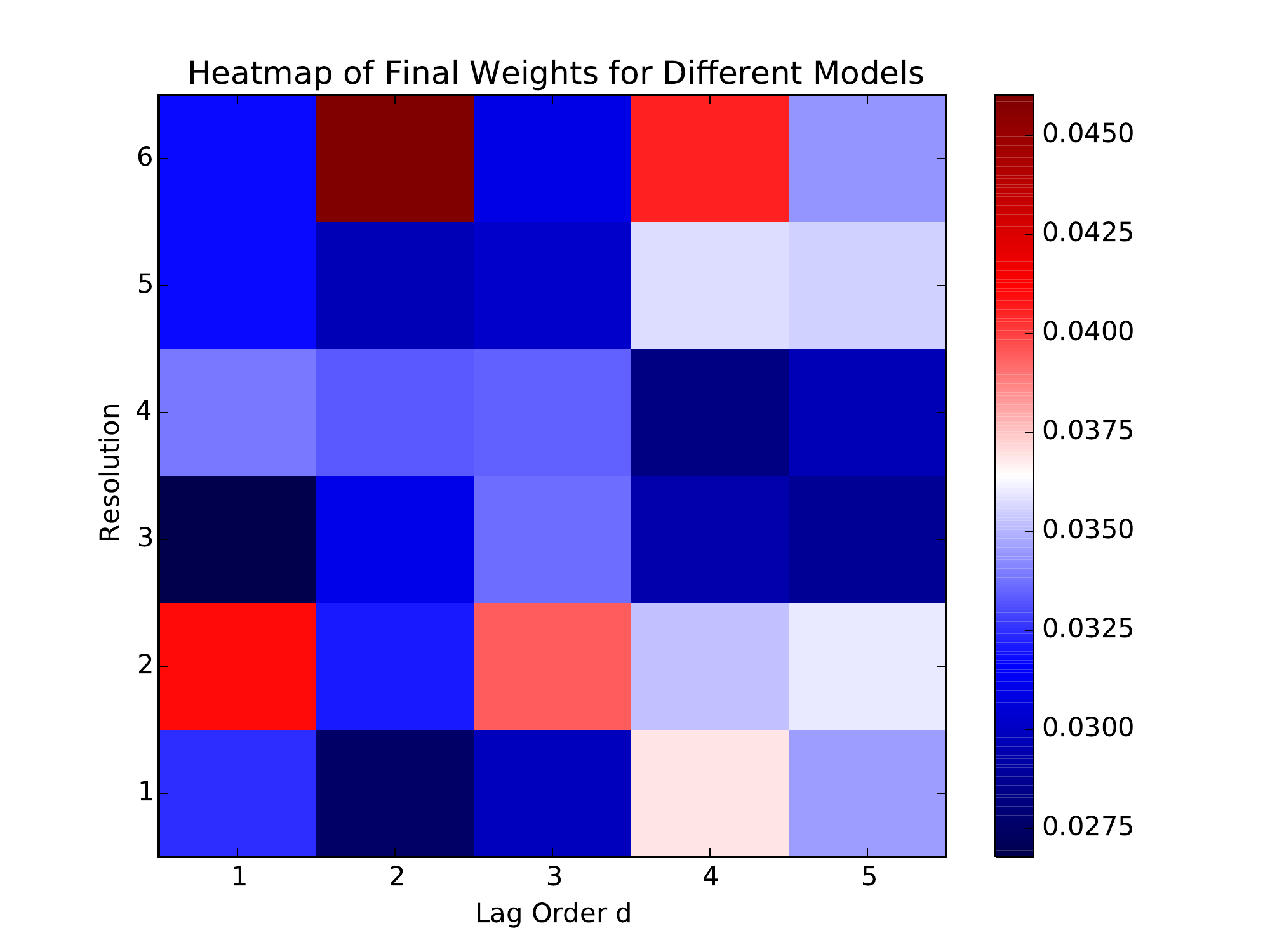}}
\caption{Heatmap shows final weights for different models in exponential weighting algorithm. The resolution includes 1-day data, 2-day average data, 4-day average data, weekly (5-day) average data, 10-day average data, monthly (20-day) average data, six resolution levels (marked as resolution 1 to 6 in the figure), and the lag order is also from one to five. There are 30 models in total. The redder the block, the larger weight the model has. The bluer the block, the smaller weight the model has. }
\label{sp500-expweights2}
\end{center}
\vspace{-1cm}
\end{figure} 

\begin{figure}[h]
\vspace{-0.2cm}
\begin{center}
\centerline{\includegraphics[width=\columnwidth]{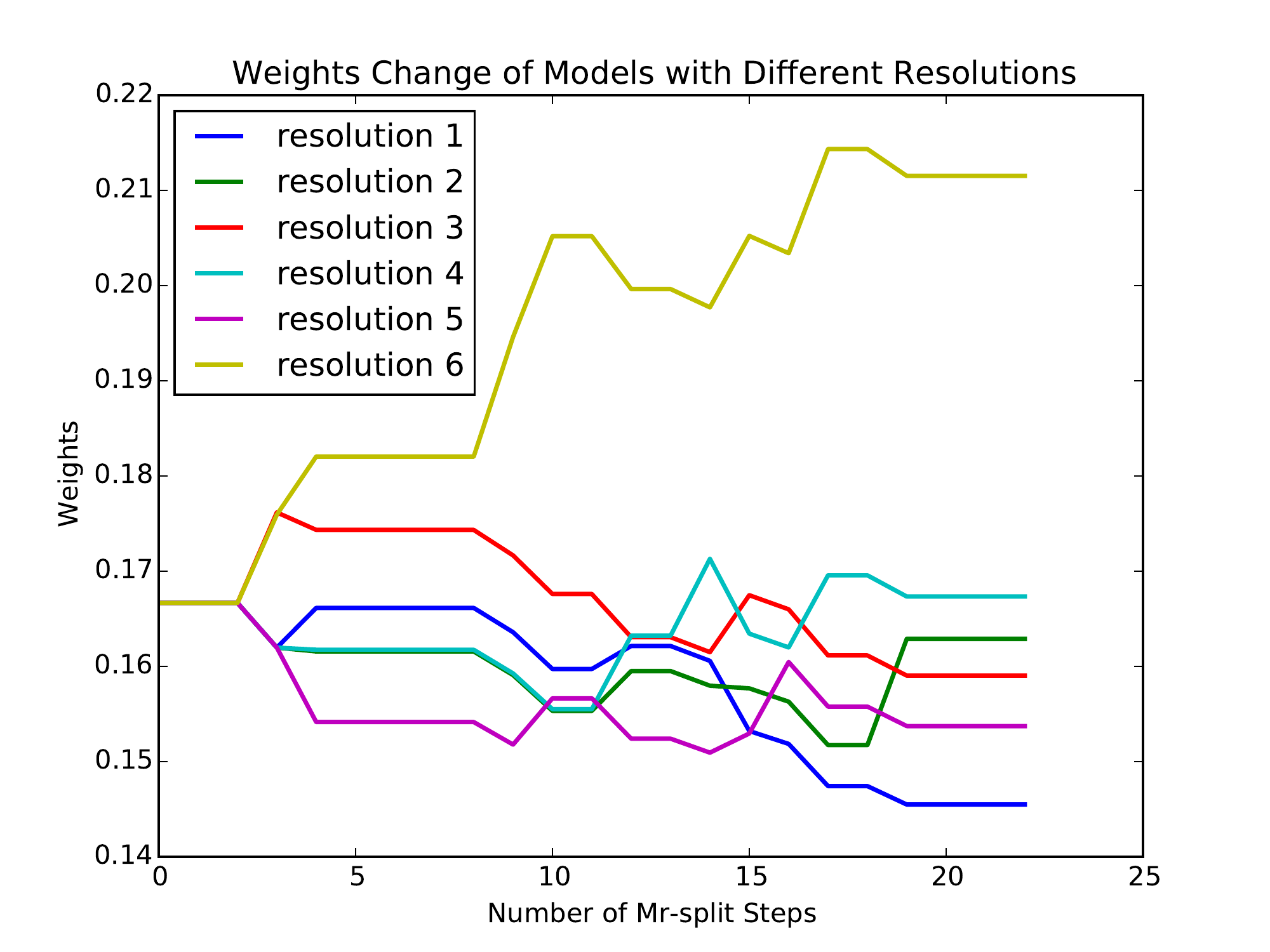}}
\caption{The weights change of different models as we make prediction on test data. We keep the lag order fixed as 2, and only change resolutions: 1-day data, 2-day average data, 4-day average data, weekly (5-day) average data, 10-day average data, monthly (20-day) average data (six resolution levels in total), in order to show how the weight of each model changes clearly.}
\label{sp500_monthavg_expweights_change}
\end{center}
\vspace{-1cm}
\end{figure} 

We include 1-day data, 2-day average data, 4-day average data, weekly (5-day) average data, 10-day average data, monthly (20-day) average data, six resolution levels in our algorithm. The lag order d is from one to five. There are 30 models in total. \textit{Figure}~\ref{sp500-expweights2} shows the final weight on each model. It is clear from the weight that model with resolution label 6, lag order 2 has the largest weight, namely using monthly average data and making prediction based on past two monthly average data. For model with the highest resolution (labelled as Resolution 1), they get lower weight. \textit{Figure}~\ref{sp500_monthavg_expweights_change} shows the change of weights as we make predictions on the monthly average index change. For the purpose of displaying weights change clearly, we keep the lag order fixed as 2, and only change the resolution from every one data point to every five data points. The resolution 6 model, namely every one data point, gains weight as we make prediction.The optimal RMSE we get is 0.0011. The optimal RMSE got from TAR is 0.0013. \textbf{Mr-split} gives better prediction compared to traditional methods.\\

\section{Conclusion}
We design a non-parametric approach for time series inference based on resolutions/scales information. A non-parametric pattern matching approach that gives consistent prediction on a general autoregressive model is proposed. We also proposed a sequential prediction algorithm that gives good estimation for complex models by combining models from multiple resolutions. Experiments on both synthetic data and real data show that the approach is applicable to complex time series data inference.

\section{Acknowledgements}

The authors are especially grateful to Professor Finale Doshi-Velez for carefully reading the draft and providing valuable suggestions in terms of how to improve the quality of the paper. We also thank teaching fellows and classmates in the machine learning course at Harvard University for helpful discussions.

\appendix
\section{Proof of Theorem 2}

For $n=1,\ldots,N$, resolution $r$ gives predictor $\hat{x}_{r,n} \in \mathbb{R}$. The true value $x_n$ is then revealed. The $r$th resolution suffers loss $\ell(\hat{x}_{r,n},x_n)$, where $\ell(\cdot)$ is defined as $\ell(\hat{x},x) = \min\{(\hat{x}-x)^2, C\}$ for some properly chosen constant $C>0$.
Our goal is to achieve the optimal prediction error $\sigma^2$ for the combined predictor $\hat{x}_{0,n}$. Suppose that the $r_0$th resolution is the true data generating resolution. We define the cumulated loss $L_{0,N}$ and $L_{r_0,N}$ as 
$$
L_{0,N} = \sum\limits_{n=1}^N \ell(\hat{x}_{0,n},x_n) , \ 
L_{r_0,N} = \ell(\hat{x}_{r_0,n},x_n).
$$
Note that the weights can be defined equivalently by letting $w_{r,1}=1$ and $w_{r,n+1} =w_{r,n} e^{-\eta \ell(\hat{x}_{r,n},x_n)}$. 
Define $W_n = \sum_{r=1}^R w_{r,n}$. It follows that $W_1=N$ and 
\begin{align} 
&\log \frac{W_{N+1}}{W_1} 
= \log \sum\limits_{r=1}^R w_{r,N+1} - \log R 
= \log \sum\limits_{r=1}^R e^{-\eta L_{r,N}} - \log R  \nonumber \\
&\geq \log \biggl( e^{-\eta L_{r_0,N}} \biggr)- \log R 
=-\eta L_{r_0,N} - \log R  \label{eq101}
\end{align}
We define $A_n$ to be the event that $ \max_{r=1,\ldots,R} \ell(\hat{x}_{r,n},x_n)<C $, $A_n^c$ the complement of $A_n$, and $\tilde{A}_N = \cap_{n=1}^N A_n$. 
Since 
\begin{align*}
\log \frac{W_{n+1}}{W_n}=\log \frac{\sum\limits_{r=1}^R w_{r,n+1}}{\sum\limits_{r=1}^R w_{r,n}} 
= \log \frac{\sum\limits_{r=1}^R w_{r,n}e^{-\eta L_{r,n}} }{\sum\limits_{r=1}^R w_{r,n}}
\end{align*}
we obtain
\begin{align}
E \log \frac{W_{n+1}}{W_n}  
&=E \biggl( \log \frac{W_{n+1}}{W_n} | \tilde{A}_n  \biggr) P(\tilde{A}_n) \nonumber \\
&+E \biggl( \log \frac{W_{n+1}}{W_n} | \tilde{A}_n^c  \biggr) P(\tilde{A}_n^c) \nonumber \\ 
& \leq E \biggl( \log \frac{W_{n+1}}{W_n} | \tilde{A}_n \biggr)  \label{eq:104}
\end{align}
where we have used $P(\tilde{A}_n) \leq 1$ and 
$W_{n+1} / W_n \leq 1$. 

Conditioning on $\tilde{A}_n$, the value of $\log (W_{N+1}/W_1)$ can be bounded by 
\begin{align} 
 \log \frac{W_{n+1}}{W_n}  
&\leq  -\eta \frac{\sum\limits_{r=1}^R w_{r,N} \ell(\hat{x}_{r,n},x_n) }{\sum\limits_{r=1}^R w_{r,N}} + \frac{\eta^2 C^2}{8}  \label{eq105} \\
&\leq  -\eta  \ell(\hat{x}_{0,n},x_n)   + \frac{\eta^2 C^2}{8} \label{eq:102}
\end{align}
where we have applied Hoeffding's inequality\footnote{Hoeffding's inequality states that $\log E(e^{sY}) \leq sEY+s^2(a-b)^2/8$ for any random variable $Y \in [a,b]$.} to (\ref{eq101}), and Jensen's inequality to the first part of (\ref{eq105}) since $\ell$ is convex in its first argument.
Bringing (\ref{eq:102}) into (\ref{eq:104}) for $n=1,\ldots,N$, we obtain
\begin{align}
E\biggl( \log \frac{W_{N+1}}{W_1}  \biggr)
&=E\biggl( \sum\limits_{n=1}^N \log \frac{W_{n+1}}{W_n} \biggr) \nonumber \\
&\leq 	
 -\eta \sum\limits_{n=1}^N \ell(\hat{x}_{0,n},x_n)  
 + \frac{\eta^2 C^2 N}{8} \label{eq110}
\end{align}
Combining Inequalities (\ref{eq110}) and (\ref{eq101}), we get 
\begin{align}
E (L_{0,N}) 
&\leq E( L_{r_0,N}) + \frac{\log R}{\eta} + \frac{\eta C^2 N}{8}   \nonumber \\
&
= E( L_{r_0,N}) + \sqrt{\frac{N \log R}{2}}C  \label{eq114}
\end{align}
where the last equality is achieved when 
\begin{align}	
\eta = \frac{1}{C}\sqrt{ \frac{8 \log R}{N} }.
\end{align}

What we are truly interested in is whether 
$$\frac{1}{N} \sum\limits_{n=1}^N (\hat{x}_0-x)^2 $$ converges to the optimum $\sigma^2$.

From our assumption and the Birkhoff Ergodic Theorem, 
\begin{align}
\frac{E(L_{0,N})}{N} 
=\frac{1}{N}\sum\limits_{n=1}^N E\biggl[ \min\{ (\hat{x}_{0,n}-x_n)^2, C\} \biggr]  \label{eq115}
\end{align}
converges to $E[ \min\{ Y_{0}, C \} ]$,
where $Y_{0}$ denotes the stationary distribution of $(\hat{x}_{0,n}-x_{n})^2$, as $N$ tends to infinity.
We use $\textbf{1}_{E}$ as the indicator random variable of event $E$. Since 
\begin{align}
E[ \min\{ Y_{0}, C \} ] 
&=	E( Y_{0} \textbf{1}_{Y_{0}<C} + C \textbf{1}_{Y_{0}\geq C} ) \nonumber \\
&=E( Y_{0}) - E( Y_{0} \textbf{1}_{Y_{0}>C} ) + C \ \P(Y_0 \geq C)  \nonumber \\
&\geq E( Y_{0}) - \sqrt{ E (Y_{0}^2) E( \textbf{1}_{Y_{0}>C}) }
 \label{eq111} \\
&\geq E( Y_{0}) - \sqrt{ \frac{E (Y_{0}^2) \ \P(Y_0>C) }{C} } \label{eq112} \\
&\geq E( Y_{0}) - \sqrt{ \frac{E (Y_{0}^2) }{C} }\label{eq113} 
\end{align}
where Inequality (\ref{eq111}) is from Cauchy's inequality and $\P(Y_0 \geq C) \geq 0$, (\ref{eq112}) is from Markov's inequality, and (\ref{eq113}) is from $\P(Y_0 \geq C) \leq 1$.
Combining the results from (\ref{eq113}), (\ref{eq114}) and (\ref{eq115}),  we obtain 
\begin{align}
E( Y_{0}) &\leq 	\frac{E( L_{r_0,N})}{N}+ \sqrt{ \frac{E (Y_{0}^2) }{C} }  + \sqrt{\frac{ \log R}{2N}}C + o(1) \nonumber \\
&\leq E\{(\hat{x}_{r_0}-x_n)^2\} + \sqrt{ \frac{E (Y_{0}^2) }{C} } 
+ \sqrt{\frac{\log R}{2N}}C + o(1) \nonumber \\
&= \sigma_{r_0}^2 + \frac{3}{2} \{E(Y_0^2)\}^\frac{1}{3} \bigl( 2 \log R \bigr)^{\frac{1}{6}} N^{-\frac{1}{6}} + o(1) \label{eq116} \\
&=\sigma_{r_0}^2 + o(1) \nonumber
\end{align}
as $N$ tends to infinity,
where the equal sign in (\ref{eq116}) is achieved at 
\begin{align}
\C=\biggl( \frac{E(Y_0^2)}{2 \log R} N \biggr) ^{\frac{1}{3}}   \label{eq:opt}
\end{align}	
This further implies that $\sigma_0^2 = E( Y_{0}) \leq \sigma_{r_0}^2$.

\bibliography{multiscale_learning} 
\bibliographystyle{abbrv}

\end{document}